\newcommand{\Rmnum}[1]{\expandafter@slowromancap\romannumeral #1@}
\begin{document}
	\color{black} 
	\title{Interference Management in MIMO-ISAC Systems: A Transceiver Design Approach }
	
	\author{\IEEEauthorblockN{Yangyang Niu, \emph{Student Member, IEEE}, Zhiqing Wei, \emph{Member, IEEE}, Dingyou Ma, \emph{Member, IEEE}, Xiaoyu Yang,  Huici Wu, \emph{Member, IEEE}, Zhiyong Feng, \emph{Senior Member, IEEE} and Jianhua Yuan, \emph{Member, IEEE}}
		
		\thanks{
			
			Y. Niu, Z. Wei, D. Ma, X. Yang, H. Wu, Z. Feng  and Jianhua Yuan are with Beijing University of Posts and Telecommunications, Beijing 100876, China (e-mail: niuyy@bupt.edu.cn, weizhiqing@bupt.edu.cn,  dingyouma@bupt.edu.cn, xiaoyu.yang@bupt.edu.cn, dailywu@bupt.edu.cn, fengzy@bupt.edu.cn, jianhuayuan@bupt.edu.cn).   }}
	
	
	
	\maketitle
	\thispagestyle{empty}
	\pagestyle{empty}
	\begin{abstract}\label{abstract}
		The integrated sensing and communication (ISAC) system under multi-input multi-output (MIMO) architecture achieves dual functionalities of sensing and communication on the same platform by utilizing spatial gain, which provides a feasible paradigm facing spectrum congestion. However, the dual functionalities of sensing and communication operating simultaneously in the same platform bring severe interference in the ISAC systems. Facing this challenge, we propose a joint optimization framework for transmit beamforming and receive filter design for ISAC systems with MIMO architecture. We aim to maximize the signal-to-clutter-plus-noise ratio (SCNR) at the receiver while considering various constraints such as waveform similarity, power budget,  and communication performance requirements to ensure the integration of the dual functionalities. In particular, the overall transmit beamforming is refined into sensing beamforming and communication beamforming, and a quadratic transformation (QT) is introduced to relax and convert the complex non-convex optimization objective. An efficient algorithm based on covariance matrix tapers (CMT) is proposed to restructure the clutter covariance matrix considering the mismatched steering vector, thereby improving the robustness of the ISAC transceiver design. Numerical simulations are provided to demonstrate the effectiveness of the proposed algorithm.
	\end{abstract}
	
	\begin{IEEEkeywords}
		Integrated sensing and communication, transmit beamforming design, receive filter design, interference mitigation, signal-to-clutter-plus-noise ratio. 
	\end{IEEEkeywords}

	\section{Introduction}\label{introduction}
	\IEEEPARstart{T}{he} vision of intelligent connectivity for everything in sixth generation (6G) mobile communication system is continuously driving the emergence of numerous emerging applications, such as autonomous driving, holographic communication, intelligent cities, and immersive extended reality \cite{b1}. The continuous development trends of wireless services and the demand for high spectrum utilization efficiency have promoted a new design principle, which integrates wireless communication and sensing functions into one system, so-called integrated sensing and communication (ISAC) \cite{b2}. This design pattern has sparked widespread research interest in the joint design of communication and sensing \cite{b3, b4, wei2023integrated, feng2020joint, wang2021symbiotic}.  
	
	An ISAC system simultaneously performs sensing and communication functions, achieving mutual benefits through effective sharing of the same spectrum resources, integrated transceiver platform design, and joint signal processing framework \cite{cui2021integrating, 10273396, liu2022survey}. However, the communication and sensing functionalities are coupled together. The enhancements in communication performance influence the sensing capability and vice versa. The diverse interference, including multi-user interference (MUI) for communication, clutter generated by multiple targets in the environment, and sensing interference (SI) between sensing and communication, bring additional complexity to the ISAC signal processing, which brings challenges to achieving high-reliability information transmission and high-precision sensing in ISAC systems \cite{ma2020joint}. With multiple-input-multiple-output (MIMO) architecture widely employed for enabling ISAC systems, combining multiple beams utilizing spatial degrees of freedom enables communication with multiple users while simultaneously sensing multiple targets. Hence, by utilizing beamforming design, feasible solutions can be designed to avoid interference. Therefore, this paper focuses on the robust design of MIMO-ISAC systems for interference management. 
	
	The robust design for the MIMO-ISAC systems mainly solves two subproblems. The first subproblem is interference avoidance at the transmitter, and the second is interference suppression at the receiver. Initially, the two subproblems are studied separately. 
	For the first subproblem, studies can be classified into three categories based on the considered interference types: MUI \cite{a13,9424454,liu2020joint,zhuo2022multibeam }, clutter \cite{li2017joint, chen2022generalized, aldayel2016successive }, and SI \cite{liu2020joint, li2017joint, zhuo2022multibeam}.
	Liu  \textit{et al.} proposed a design of transmit waveform by minimizing a joint least squares function of waveform similarity error (WSE) and total MUI. The weighting factors are adjusted to balance the performance between dual functions \cite{a13}. 
	In the scenario with a single target and multiple users, Chen \textit{et al.}  proposed a Pareto optimization framework to design the joint beamforming based on the peak side-lobe level metric for radar sensing and the signal-to-interference-plus-noise ratio (SINR) metric for communication \cite{9424454}. However, this approach is computationally complex and does not consider the interference between sensing and communication. 
	In \cite{aldayel2016successive}, the sensing design incorporates similarity and constant-modulus constraints. However, the communication interference is assumed to be fixed. To overcome this challenge, Liu \textit{et al.} proposed a streamlined approach that incorporates zero-forcing MUI and SI to mitigate the impacts of sensing and communication \cite{liu2020joint}. They creatively derived the separated sensing and communication beamforming matrices at the transmitter, thereby significantly enhancing the flexibility of the transmit beamforming design. 
	
	Regarding the second subproblem, the design principle is to achieve a distortion-free output of signals in the desired direction. The research can generally be classified into single-receive filter design \cite{o2017relaxed, blunt2010embedding} and multiple-receive filter design \cite{sahin2017filter}.	
	O’Rourke \textit{et al.} jointly optimized the signal pulse and received filter to effectively suppress clutter under the assumption of minimum variance distortionless response (MVDR) with the prior knowledge of clutter channel state information (CSI)  \cite{o2017relaxed}. Blunt \textit{et al.} proposed a filter design method based on the least squares method \cite{blunt2010embedding}. However, these filters exhibit a high sidelobe level (SLL) and experience a degradation in SINR. To overcome this problem, a cascaded filtering method is proposed in \cite{sahin2017filter}. This approach involves the design of two consecutive filters. The primary objective of the first filter is to enhance the similarity between different filter outputs, while the second filter aims to reduce the SLL. Although this approach effectively mitigates range sidelobe modulation and reduces the SLL in the filter outputs, it results in a degraded SINR. 
	
	Further, the superiority of the joint design of transmit beamforming and receive filter is demonstrated under several constraints \cite{cheng2019joint, tang2016joint,  o2020quadratic,  liu2022achieving,liu2022joint}. For instance, Chen \textit{et al.} described a joint design of transmit and receive beamforming under energy constraint and similarity constraint to maximize the  sensing SINR, so as to enhance the probability of extended target detection. An iterative scheme is devised via the semidefinite relaxation (SDR) technique and MVDR to solve the resultant problem \cite{cheng2019joint }. 
	Liu \textit{et al.} presented an optimal transceiver beamforming algorithm, which leverages the sequential convex approximation (SCA) approach to obtain the Pareto boundary of the performance region. Within the boundary, it is possible to achieve the simultaneous service and detection of multiple communication users and radar sensing targets \cite{ liu2022achieving}.
	Liu \textit{et al.} jointly designed the transmit waveform and the receive filter to maximize the output SINR while satisfying the communication quality-of-service (QoS) requirements and the constant modulus power constraint \cite{liu2022joint}.
	
	Notice that all the aforementioned studies have generally relied on a common assumption of perfect prior information of CSI and the targets' directions at the transmitter. Due to the factors such as feedback delay and limited feedback bandwidth, the transmitter can only acquire incomplete information in practical applications. Hence, it becomes essential to pursue robust transceiver design. Two types of definitions for robustness are commonly found in the existing literature, specifically, “worst-case robustness” 
	\cite {wang2009worst,wang2013robust} and “statistical robustness” \cite{zhang2009robust, zhu2015robust, tsinos2021joint, shen2013joint }. Specifically, Zhu  \textit{et al.} proposed a robust joint design  to maximize the worst-case SINR of targets with an unknown statistical distribution \cite{ zhu2015robust }. Although the scenario with the maximum angular information loss of targets is considered, the objective is still to maximize the sensing output SINR. In other words, the impact of the angular information on clutter suppression is not addressed.  Moreover, the optimization model does not take the communication process into consideration. Tsinos  \textit{et al.} introduced a novel joint transceiver design method for multi-user MIMO (MU-MIMO) system, which incorporates matching filter combination vectors in beamforming design using channel time correlation \cite{tsinos2021joint}. This approach improves the traditional signal-leakage-plus-noise ratio (SLNR) beamforming performance and focuses on the robustness to mismatched CSI. From a statistical standpoint, achieving robustness can be pursued by optimizing the averaged performance metric when the statistical distribution of uncertainties is known.  
	Though the given robust schemes can achieve appreciable performance gains, there is still a significant gap in the design of robust joint transceiver architectures with strong interference resistance for MIMO-ISAC systems to achieve desired output SINR and transmit beampattern property. 
	
	\begin{table*}[t]
		\centering
		\caption {comparison of this work with existing literature on transceiver design for MIMO-ISAC systems}
		\label{table}
		\begin{tabular}{|c|c|c|c|c|c|c|c|c|c|c|c|c|c|c|c|}  
			\hline  
			& [12 ]               & [14]              & [16]   & [17]       
			& [18] & [19]  & [21]&[22] & [25]&   [26] & [28]& [30]& [31] & [32] & Our work \\  
			\hline  
			MUI & \checkmark & \checkmark & \checkmark  & \checkmark
			& $\times$  & $\times$  & $\times$ & $\times$  & \checkmark&   \checkmark & $\times$ & \checkmark & \checkmark & \checkmark &\checkmark \\  
			\hline
			
			SI & $\times$  & \checkmark & \checkmark & $\times$
			& $\times$  & $\times$  &$\times$ & $\times$ & $ \times $& $ \times $ &$\times$ & $\times$ & $\times$ & $\times$ &\checkmark \\  
			\hline 
			
			Clutter & $\times$ &\checkmark & \checkmark &\checkmark
			& \checkmark  & \checkmark  & \checkmark &\checkmark& \checkmark&  \checkmark &$\times$& \checkmark & \checkmark & $\times$ &\checkmark\\  
			\hline  
			Mismatched steering vector & $\times$ & $\times$ & $\times$  & $\times$
			& $\times$  & $\times$  &$\times$&$\times$& $\times$&   $\times$ & \checkmark & \checkmark & $\times$ & \checkmark &\checkmark \\  
			\hline
			
			Transmitter design & \checkmark  & \checkmark & \checkmark & \checkmark
			& \checkmark  & $\times$ &   $\times$  &\checkmark& \checkmark&   \checkmark & \checkmark & \checkmark & \checkmark & \checkmark &\checkmark \\  
			\hline
			
			Receiver design & $\times$  & $\times$ & $\times$ &\checkmark
			& $\times$   & \checkmark & \checkmark &\checkmark & \checkmark&   \checkmark &$\times$& \checkmark & \checkmark &\checkmark &\checkmark\\  
			\hline
			
			Transceiver design & $\times$  & $\times$ & $\times$ & \checkmark
			& $\times$  & $\times$  & $\times$ & \checkmark & \checkmark &   \checkmark  & $\times$& \checkmark & \checkmark & \checkmark &\checkmark \\  
			\hline
			
			MIMO-ISAC systems & \checkmark  & \checkmark & $\times$ & \checkmark
			& $\times$  & $\times$ & \checkmark &$\times$& \checkmark& \checkmark &$\times$&$\times$ & \checkmark & $\times$ &\checkmark\\   
			\hline
		\end{tabular}
	\end{table*}
	
	This paper presents a novel framework to enhance target detectability in signal-dependent clutter and multiple communication users through the robust joint transceiver design in MIMO-ISAC systems. In Table \uppercase\expandafter{\romannumeral1}, we provide a brief comparison of this study with other existing works in the literature to highlight its discreteness. The main contributions of this paper are as follows.
	\begin{itemize}
	\item[$\bullet$] \textbf{Interference Signal Modeling: } 
	\textcolor{black}{ In downlink ISAC systems, the communicating users are interfered by sensing signals in addition to the traditional multi-user interferences. Signal-dependent clutter caused by environmental targets arriving at the receiving array can suppress sensing accuracy. This paper reconstructs the clutter covariance matrix in the presence of mismatched steering vectors.}
	
	\end{itemize} 	
	\begin{itemize}
		\item[$\bullet$]\textbf{Transmit Beamforming Design:} Quadratic transform (QT) is used for the transmit beamforming subproblem due to the fractional objective functions. The problem can be transformed into an equivalent convex problem via SDR, and the tightness of the relaxation is presented. The Cholesky decomposition is used to derive the communication and sensing beamforming matrices.
	
	\end{itemize}
	\begin{itemize}
		\item[$\bullet$] \textbf{Receive Filters Design:} The covariance matrix taper (CMT) technique is used to modify the clutter covariance matrix. To derive an improved weight vector, we substitute the reconstructed clutter covariance matrix by Malloux-Zatman (MZ) taper into the MVDR filter. The CMT approach enables adjustable null width in clutter directions, which is practically significant compared to the ideal scenario of perfect CSI knowledge.
	\end{itemize}	
	\begin{itemize}
		\item[$\bullet$] \textbf{Performance Verification:} To comprehensively examine the influence of various parameters on the performance of sensing and communication, we investigate the compensation effect of multi-antenna gain through extensive observations. Under the framework of joint transceiver design, as the number of antennas increases, the null depths of clutter become deeper, allowing for greater sensing gains while satisfying communication QoS. 
	 
	\end{itemize}
	
	The subsequent sections of this paper are structured as follows. Section \uppercase\expandafter{\romannumeral2} introduces the system model and performance metrics for MIMO-ISAC  systems. Section \uppercase\expandafter{\romannumeral3} provides a detailed description of the bivariate optimization problem model under multiple constraints. The corresponding solutions, which utilize an alternating optimization approach involving beamforming design at the transmitter and filter design at the receiver, are provided in Section \uppercase\expandafter{\romannumeral4}. Simulation results are provided in Section \uppercase\expandafter{\romannumeral5}, followed by concluding remarks in Section \uppercase\expandafter{\romannumeral6}.

Notations: $\mathbf{A}$, $\mathbf{a}$, $a$ denote a matrix, vector, and variable, respectively. $\mathbf{I}$ represents the unit matrix, and the subscript denotes the dimension. $(\cdot)^{T}$, $(\cdot)^{H}$, and $(\cdot)^{-1}$ denote the transpose, Hermitian transpose, and inverse of a matrix, respectively. $|\cdot|$ denotes the element-wise absolute value.  $\mathbb{C}^{m \times n}$ is the set of complex-valued matrices. $\mathbb{E}(\cdot)$ denotes the statistical expectation. 
\textcolor{black}{$\mathcal{S}_{m}^{+}$ represents a set of positive semidefinite matrices of dimension $m$.
	$\|\cdot\|_F$ denotes the Frobenius norm of a matrix.}

$\mathbf{A} \circ \mathbf{B} $ is the Hadamard product of matrices  $\mathbf{A}$ and $\mathbf{B}$. 

Moreover, the main notations used throughout the paper are listed in Table \ref{Notation}.
	
	\begin{table}[t]
		\centering
		\caption {Notations Used Through the Paper}
		\label{Notation}
		\begin{tabular}{cl}
			\toprule   
			\textbf{Notation}  & \textbf{Definition}  \\
			\midrule       
			$N_t(Nr)$  &   Number of transmit (receive) antennas \\
			$K$    & Number of communication users \\	
			$P$   &  Number of clutter blocks \\		 
			$\mathbf{X}$   &   Transmit ISAC signal\\ 	
			$\mathbf{X}_0$ &   Reference waveform\\ 		
			$\mathbf{W}_s (\mathbf{W}_c)$  & Sensing (communication) beamforming matrix \\
			$\mathbf{W}$    &  Overal transmit beamforming matrix \\  
			$\boldsymbol{\omega} $ &    Receive filter   \\
			$\mathbf{h}_k $   & Downlink channel vector between transmitter and $k$-th user \\   
			$ \mathbf{A}(\theta)$ &   Steering vector at direction $\theta$\\
			$\mathbf{R}$     &  Transmit ISAC signal covariance matrix  \\  
			$\mathbf{R}_c$   &	Clutter covariance matrix \\
			$\mathbf{\tilde{e}}$     &	Phase disturbance vector \\
			${\mathbf{T}} $   &Taper matrix \\
			$\gamma_k$ &	 SINR at $k$-th communication user\\
			$\rho$   &   Signal-to-clutter-plus-noise ratio \\
			$\Delta $  &     Parameter determines the width of clutter nulls  \\
			$ \alpha$  &     Parameter determines the  waveform similarity  \\
			$ \Gamma$  &      Communication SINR threshold\\
			\bottomrule    
		\end{tabular}
	\end{table}
	
	\section{System Model and Performance Metrics}
	\subsection{System Model}\label{system model}
	
	\begin{figure}
	\centerline{\includegraphics[scale=0.6]{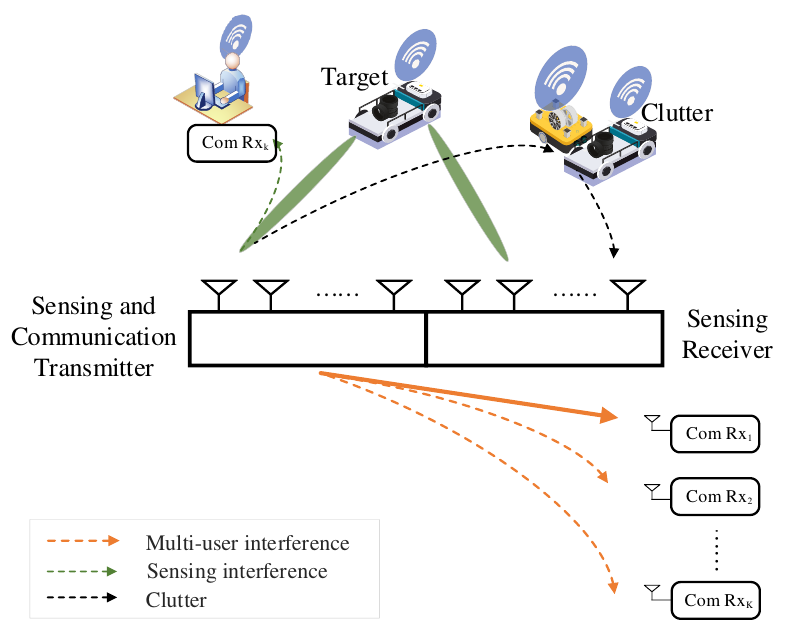}}
	\caption{The communication receiver is interfered with by sensing waveform and unintended communication symbols, while the sensing receiver is interfered with by clutter from environmental targets.}
	\label{Fig.1}
\end{figure}
The considered narrowband ISAC system is shown in Fig. 1, where an ISAC base station (BS) is equipped with $N_t$ transmit antennas and $N_r$ receive antennas arranged in a uniform linear array with half-wavelength inter-antenna spacing. For convenience, let $N_t=N_r= M $. 
\textcolor{black}{ The BS transmits the ISAC signal $\mathbf{X}$, simultaneously implementing the target sensing and $K$ users communication capabilities.  }
Each user is equipped with a single antenna that consecutively receives signals from the transmitter, where $K\leq M$. 
The multi-antenna platform with multi-beam control capability enhances the spatial freedom to support target sensing and multi-user communication. 
In particular, the transmit ISAC signal \textcolor{black}{$\mathbf{X} \in \mathbb{C}^{M\times N}$ } can be denoted by 
\begin{equation}
	\mathbf{X}= \mathbf{W}_{c} \mathbf{S}_{c}+\mathbf{W}_{s} \mathbf{S}_{s}=\left[\mathbf{W}_{c}, \mathbf{W}_{s}\right]\left[\begin{array}{l}
		\mathbf{S}_{c} \\
		\mathbf{S}_{s}
	\end{array}\right] =\mathbf{W S},
\end{equation}
\textcolor{black}{ where each row vector $\mathbf{x}[m] \in \mathbb{C}^{1 \times N }$ indicates a set of $N$ sampling results of the signal encoding sequence emitted by the $m$-th antenna, and each column vector $ \mathbf{x}[n] \in \mathbb{C}^{M\times 1}$ represents the signals transmitted by $M$ antenna elements at time index $n$.} 
$\mathbf{S}_c \in\mathbb{C}^{K \times N}$ denotes $ K$ parallel communication symbols for the desired signal and $\mathbf{S}_s \in\mathbb{C}^{M \times N}$ represents $M$ orthogonal sensing waveforms. 
$\mathbf{W}\in \mathbb{C}^{M \times (M+K)} $, $\mathbf{W}_{c}\in \mathbb{C}^{  M \times K} $, $\mathbf{W}_{s} \in\mathbb{C}^{M \times M}$ represent the overall beamforming matrix, the communication beamforming matrix, and the sensing beamforming matrix, respectively. 

\textcolor{black}{
	For a given time index $n$, the transmitted signal can be formulated as 
	\begin{equation}
		\mathbf{x}[n]= \mathbf{W}_{c} \mathbf{s}_{c}[n]+\mathbf{W}_{s} \mathbf{s}_{s}[n],
	\end{equation}  
	where $\mathbf{s}_{c}[n] \in \mathbb{C}^{K \times 1 } $ denotes the $n$-th column of $\mathbf{S}_c $ and $\mathbf{s}_{s}[n] \in \mathbb{C}^{M \times 1 } $ denotes the $n$-th column of $\mathbf{S}_s $. 
	Without loss of generality, we assume that the sensing waveforms satisfy $(1 / N) \mathbb{E}\left\{\sum_{n=1}^{N} \mathbf{s}_{s}[n] \mathbf{s}_{s}^{H}[n]\right\}=\mathbf{I}_{M}$.
	Similarly, the communication symbols are a generalized smooth random process with a unit average power $(1 / N) \mathbb{E}\left\{\sum_{n=1}^{N} \mathbf{s}_{c}[n] \mathbf{s}_{c}^{H}[n]\right\}=\mathbf{I}_{K}$. 
	Moreover, the communication symbols are uncorrelated with the sensing waveforms, i.e., $(1 / N) \mathbb{E}\left\{\sum_{n=1}^{N} \mathbf{s}_{s}[n] \mathbf{s}_{c}^{H}[n]\right\} = \mathbf{0}_{M\times K}$\cite{liu2020joint}. }

	Notice that waveform design and beamforming matrix design have equivalent effects, with the only difference being that waveform design cannot separately acquire beamforming matrix information for sensing and communication purposes. Our goal is to design the matrices $\mathbf{W}_{c}$ and $\mathbf{W}_{s}$ to implement transmit beamforming. 
	
	\subsection{Multi-user Communication Performance Metric}
	We consider a downlink multi-user MIMO-ISAC system with $K$ single antenna users, assuming that the downlink communication channel is a narrowband Rayleigh fading channel and is constant within a communication frame. 
	\textcolor{black}{  The channel output signal received by the $k$-th user at time index $n$ can be expressed as 
	\begin{equation}
		\begin{aligned}
			{y}_{k}[n] & = \underbrace{\mathbf{h}_{k}^H \mathbf{w}_{c, k} {s}_{c, k}[n]}_{\text {Desired Signal }} +\underbrace{\sum_{i=1, i \neq k}^{K} \mathbf{h}_{k}^H \mathbf{w}_{c, i} {s}_{c, i}[n]}_{\text {MUI }} \\
			& +\underbrace{\sum_{i=1}^{M} \mathbf{h}_{k}^H \mathbf{w}_{s, i} {s}_{s, i}[n]}_{\text {SI }}+\underbrace{{z}_{c, k}[n]}_{\text {Noise }}, 
		\end{aligned} 
	\end{equation} 
	where $\mathbf{h}_{k}\in\mathbb{C}^{ M\times 1}$ is the channel vector between the ISAC transmitter and the $k$-th user. ${\mathbf{w}_{c,k}\in\mathbb{C}^{ M\times 1}}$ is the $k$-th column of $\mathbf{W}_c$ and ${\mathbf{w}_{s,i}\in\mathbb{C}^{ M\times 1}}$ is the $i$-th column of $\mathbf{W}_s$.
	${s}_{c, k}[n]$ is the $k$-th element of $\mathbf{s}_{c}[n]$ and ${s}_{s, i}[n]$ is the $i$-th element of $\mathbf{s}_{s}[n]$.
    ${z}_{c,k}[n] \sim \mathcal{CN}\left( 0,{{\sigma }_k^{2}}\right) $ is additive white Gaussian noise  (AWGN) received at the $k$-th user. During a coherent processing interval, the channel remains approximately constant, allowing us to calculate the SINR of communication users based on the statistical expectation. The SINR at the $k$-th communication receiver can be expressed as 
	\begin{equation}
		\begin{aligned}
			\gamma_{k} &= \frac {\mathbb{E} \left \{\left|\mathbf{h}_{k}^H \mathbf{w}_{c,k}\right|^2 \right \}}  
			{ {\mathbb{E} \left\{  \sum_{i \neq k}  \left|\mathbf{h}_{k}^H \mathbf{w}_{c,i}\right|^{2} \right\} }+ \mathbb{E} \left \{ \sum_{i=1}^{M}  \left|\mathbf{h}_{k}^H \mathbf{w}_{s,i}\right|^{2}\right \} + \sigma_k^{2} }, 
		\end{aligned}
		\label{equa 3} 
	\end{equation} 
based on $(1 / N) \mathbb{E}\left\{\sum_{n=1}^{N} \left|{s}_{c,k}[n]\right|^2 \right\}=1$, $(1 / N) \mathbb{E}\left\{\sum_{n=1}^{N} \left| {s}_{s,i}[n]\right|^2\right\}=1$ and $(1 / N) \mathbb{E}\left\{\sum_{n=1}^{N} \left|{z}_{c,k}[n]\right|^2 \right\}=\sigma_k^2$. }

	The parameter $\gamma_{k}$ is a standard performance indicator to ensure reliable transmission of communication symbols.
	Specifically, when the sensing target and the communication users are nearby (less than one beamwidth), the SI will cause a reduction in the value of $\gamma_{k}$ at the communication receiver, decreasing the overall system capacity. Hence, (4) provides a practical depiction of the downlink propagation and incorporates the inherent trade-off between sensing and communication.
	
	\subsection{Sensing Performance Metric}
	The BS obtains the desired target motion information by transmitting several beams with high directivity from the array antenna. The presence of environmental targets can result in signal-dependent clutter, thereby affecting the sensing accuracy. To enhance the robustness of the ISAC system, this paper discusses the impact of steering vector mismatches caused by imperfect CSI or variations in environmental conditions.

	\textcolor{black}{ The baseband signal at time index $n$ for target sensing located with angle $\theta$ is denoted as 
		\begin{equation}
			{y}_{s,0}[n]=\mathbf{a}^H_t\left( {{\theta }} \right)\mathbf{x}[n], n=1,2,\ldots, N,
		\end{equation}
		where  ${\mathbf{a}_{t}}\left( \theta  \right)$ is the $M\times 1$} transmit steering vector containing complex-valued elements.
	For a uniform linear array (ULA),  the steering vector is given by
	\begin{equation}
		{\mathbf{a}_{t}}\left( \theta  \right)=\frac{1}{\sqrt{M}}{{\left[ 1,{{e}^{j2\pi \frac{{{d}_{T}}}{\lambda }\sin \theta }},\ldots ,{{e}^{j2\pi \frac{{{d}_{T}}}{\lambda }\left( M-1 \right)\sin \theta }} \right]}^{T}}, 
	\end{equation}
	where ${{d}_{T}}$ is the array inter-element spacing at the BS transmitter and $\lambda $ is the transmit signal wavelength.

\textcolor{black}{ Suppose that there is a target located at angle $\theta$ along with $P$ signal-dependent clutter sources located at $\theta_{p} \neq \theta_{0}, p=1,\ldots,P$.	
The received signal of the BS receiving array at time index $n$ is   
	\begin{align}\label{ys}
		\mathbf{y}_{s}[n]  = & {{\alpha }_{0}} {{\mathbf{a}}_{r}}\left( \theta_0  \right){{\mathbf{a}}^H_{t}}\left( \theta_0  \right) \mathbf{x}[n] \notag \\
		& + \sum\limits_{p=1}^{P}{{{\alpha }_{p}}
			{{\mathbf{a}}_{r}}\left( \theta_p  \right){{\mathbf{a}}^H_{t}}\left( \theta_p  \right)
			\mathbf{x}[n]}
		+\mathbf{{z}}_{s}[n],
	\end{align} 
where $\mathbf{y}_{s}[n] \in \mathbb{C}^{M\times 1}$,  $\alpha_{0}$ and $\alpha_{p}$ denote the complex amplitudes of the target and $p$-th clutter, indicating the radar cross section (RCS) and channel propagation effects.
The vector ${\mathbf{a}_{r}}\left( \theta \right)$ is the $M\times 1 $ steering vector  due to the propagation delays from the target to the receive elements, given by  
\begin{equation}
	{\mathbf{a}_{r}}\left( \theta  \right)=\frac{1}{\sqrt{M}}{{\left[ 1,{{e}^{j2\pi \frac{{{d}_{R}}}{\lambda }\sin \theta }},\ldots ,{{e}^{j2\pi \frac{{{d}_{R}}}{\lambda }\left( M-1 \right)\sin \theta }} \right]}^{T}}, 
\end{equation}
  where ${{d}_{R}}$ is the array inter-element spacing at the BS receiver and $\mathbf{z}_{s}[n]$ is distributed as complex AWGN under distribution ${{\mathbf{z}}_{s}[n]}\sim \mathcal{CN}\left( 0,{{\sigma }^{2}}{\mathbf{I}_{M}} \right)$. 
  To simplify, we remove the time index and define $\mathbf{A}(\theta) = {{\mathbf{a}}_{r}}\left( \theta  \right){{\mathbf{a}}^H_{t}}\left( \theta  \right)$, which is dependent on the angle $\theta$. (7) can then be reformulated as:  
	\begin{equation}\label{sening receiving}
		\begin{aligned}
			\mathbf{y}_{s} &=\underbrace{\alpha_{0} \mathbf{A}\left(\theta_{0}\right) \mathbf{x}}_{\text {Echo }} \\
			& +\underbrace{\sum_{p=1}^{P} \alpha_{p} \mathbf{A}\left(\theta_{p}\right) \mathbf{x}}_{\text {Clutter }}+\underbrace{\mathbf{z}_{s}}_{\text {Noise }}.
		\end{aligned}
	\end{equation}}	
	
	To mitigate clutter and maximize the sensing output SCNR, the received signal is subjected to filtering using the weight vector $\boldsymbol{\omega} \in \mathbb{C}^{M\times 1} $. Consequently, the sensing output signal, denoted as ${z}_{out}$, can be given by
	\begin{equation}
		{z}_{out}=\boldsymbol{\omega}^{H} \mathbf{y}_{s}.
	\end{equation}
	
	Thus, the output SCNR can be formulated as
	\textcolor{black}{
	\begin{equation}
		\begin{aligned}
			\rho & =\operatorname{SCNR}(\mathbf{W}, \boldsymbol{\omega}) \\
			& =\frac{\mathbb{E}\left\{\left|\alpha_{0}\right|^{2}\left|\boldsymbol{\omega}^{H} \mathbf{A}\left(\theta_{0}\right) \mathbf{x}\right|^{2}\right\}} 
			{\mathbb{E}\left\{ \sum_{p=1}^{P}\left| \alpha_{p} \boldsymbol{\omega}^{H}   \mathbf{A}\left(\theta_{p}\right) \mathbf{x}\right|^{2} \right\}
			+ \mathbb{E} \left\{\sigma^{2} \boldsymbol{\omega}^H \boldsymbol{\omega}\right\} }\\
			& =\frac{\boldsymbol{\omega}^{H} \Phi(\mathbf{x}) \boldsymbol{\omega}}{\boldsymbol{\omega}^{H} \Phi_{\mathrm{cn}}(\mathbf{x}) \boldsymbol{\omega}},
		\end{aligned} \label{equa 10} 
	\end{equation} 
 where $ \Phi(\mathbf{x})=\left|\alpha_{0}\right|^{2} \mathbf{A}(\theta_{0}) \mathbf{x} \mathbf{x}^{H} {\mathbf{A}}^{H} ({\theta }_{0}) $.  }
 $ \Phi_{\mathrm{cn}}(\mathbf{x})=\mathbf{R}_{c}+\sigma^{2} \mathbf{I}$ is the total interference covariance matrix (in this case, clutter plus noise), and the clutter covariance matrix ${\mathbf{R}}_{c}$ is given by 
	\begin{equation}\label{equa 8}
		{{\mathbf{R}}_{c}}=\mathbb{E} \left\{ \sum\limits_{p=1}^{P}{{{\left| {{\alpha }_{p}} \right|}^{2}}\mathbf{A}\left( {{\theta }_{p}} \right)\mathbf{x} {{\mathbf{x}}^{H}}{{\mathbf{A}}^{H}}\left( {{\theta }_{p}} \right) } \right\}.
	\end{equation}
	
	However, accurate acquisition of (\ref{equa 8}) is often challenging because the motion of clutter sources or vibrations in the antenna platform can cause rapid channel variations\cite{qian2017null, 10036139 }, leading to errors in the clutter steering vectors, which not only directly impact the clutter covariance matrix but also result in a mismatch between the adaptive receive filter and the received signal. Consequently, effective clutter suppression becomes challenging. Considering the limited feasibility of continuously updating the adaptive receive filter, enhancing the robustness of the ISAC transceiver design is imperative to ensure optimal performance.	

\textcolor{black}{ 
	Assume that there exist directional perturbation in the clutter received by the BS, resulting in a phase disturbance in the array response vector. 
	Therefore, we consider the mismatched steering vector resulting from phase disturbances as the following vector random process. }
\begin{equation}
	\tilde{\mathbf{e}}\triangleq {{\left[ 1, {{e}^{j\varphi }}, \ldots, {{e}^{j\left( M-1 \right)\varphi }} \right]}^{T}},
\end{equation}
where $\varphi$ is a zero-mean distribution random variable and  the covariance matrix of $\tilde{\mathbf{e}}$ is then given by
\begin{equation}
	\mathbf{T}=\operatorname{cov}\left( \tilde{\mathbf{e}} \right).
\end{equation} 

\textcolor{black}{Due to the existence of $\tilde{\mathbf{e}}$, the clutter-plus-noise vector becomes $ \mathbf{y}_\text{CI} \circ \tilde{\mathbf{e}} $, where $\mathbf{y}_\text{CI}$ represents the latter two terms in (\ref{sening receiving}) and $\circ$ represents the Hadamard product. 
	As a result, there exists a relationship for the covariance matrix of interference by $ \rm cov(\mathbf{y}_\text{CI} \circ \tilde{\mathbf{e}})= {{\Phi }_{\text{cn}}} \circ \mathbf{T} $.} 
In other words, when the Hadamard product is applied to two uncorrelated conformal vector random processes, the resulting additive covariance matrix equals the Hadamard product of their respective covariance matrices \cite{guerci1999theory}. In consequence, for a matrix ${{\Phi }_{\text{cn}}}(\mathbf{x})$, the tapered matrix is given by ${{\tilde{\Phi }}_{\text{cn}}}$, i.e.,
	\begin{equation}
		{{\tilde{\Phi }}_{\text{cn}}}={{\Phi }_{\text{cn}}}\circ \mathbf{T}.
	\end{equation}
	Hence, the sensing output SCNR can be redefined as
	\begin{equation}
		\rho=\frac{\boldsymbol{\omega}^{H} \Phi(\mathbf{x}) \boldsymbol{\omega}}{\boldsymbol{\omega}^{H} \tilde{\Phi }_{\mathrm{cn}}(\mathbf{x}) \boldsymbol{\omega}}.
	\end{equation} 
Moreover, it is assumed that any possible self-interference arising from the transmit and receive arrays is effectively mitigated by employing an appropriate method \cite{qian2017null }.
	
	\section{Problem Formulation}
	Based on the above analysis, this paper proposes an efficient method for the MIMO-ISAC system in a cluttered environment, considering the effects of MUI and SI. Specifically, the objective is to maximize the SCNR at the BS receiver while incorporating extended constraints and ensuring robustness by considering mismatched steering vectors.

In practical sensing systems, utilizing constant-modulus signals in amplifiers is crucial for achieving maximum efficiency and minimizing excessive amplitude modulation. \textcolor{black}{ It is assumed that the transmitted energy is normalized to a standard level. Given that $\mathbf{R}$ represents the covariance matrix of the transmit ISAC signal, $\mathbf{R}$ can be expressed jointly as
\begin{equation}	
	\begin{aligned}
		\mathbf{R} & =\mathbb{E}\left[\mathbf{x} \mathbf{x}^H\right] = \mathbf{W} \mathbf{W}^H. \\
	\end{aligned}
\end{equation}
The per-antenna power constraint implies that for each antenna $m$, it is required that
	\begin{equation}
		{\left| \mathbf{R}_{m,m} \right|} \leq {{{P}_{t}}}/{M}, \ \forall m,
	\end{equation}
	where $\mathbf{R}{m,m}$ denotes the $m$-th diagonal element of $\mathbf{R}$ and $P_{t}$ is the total available transmit power. This constraint ensures that the power transmitted from each antenna does not exceed a specified threshold, which is crucial for effective signal transmission within a coherent processing interval (CPI).}
	
	\textcolor{black}{The accuracy of target detection, identification, and tracking for ISAC signals relies on the similarity between the covariance of the optimized transmitted signal and the covariance of the reference signal, denoted as $\mathbf{R}_{0}$. The error between ${{\mathbf{R}}}$ and ${{\mathbf{R}}_{0}}$ is defined as a constraint that indicates the sensing performance, expressed as 
	\begin{equation}
		\begin{aligned}
			L(\mathbf{R})  =\left\|\mathbf{R}_0-\mathbf{R}\right\|_F,
		\end{aligned}
	\end{equation}
	where the notation $\|\cdot\|_F$ represents the Frobenius norm.} 
	
	To meet the requirements of sensing, the maximum value of the covariance error needs to be smaller than $\xi $, i.e., $L\left( \mathbf{R} \right)\le \xi $, where $\xi \in \left[ 0,\alpha {{P}_{t}} \right]$ and $\alpha $ is a real number between 0 and 2 \cite{ cui2013mimo, karbasi2015robust, he2020joint}. By decreasing the similarity threshold $\xi $, the soft control on the modulus variolation, peak sidelobe level, and range resolution of the ISAC signal can be obtained.

In the presence of communication requirements, allocating power for transmitting communication symbols leads to a degradation in sensing functionality. 
To ensure the minimum level of communication QoS for each user, it is required to ensure that the SINR at each communication receiver is higher than a given threshold $\Gamma $, i.e.,
\begin{equation}
	{{\gamma }_{k}}\ge \Gamma,  \ k=1,\ldots ,K. 
\end{equation}

Based on the above analysis, we propose a joint optimization approach to maximize the SCNR at the sensing receiver by jointly optimizing both transmit beamforming and receive filter, which can be achieved by solving the following optimization problem. 
\begin{subequations}
	\begin{align}
		\max _{\boldsymbol{\omega}, \mathbf{W}}  & \ \rho \notag\\
		\text { s.t. } &\mathbf{R}=\mathbf{W} \mathbf{W}^H, \mathbf{R} \in \mathcal{S}_{M}^{+}, \\
		& \textcolor{black}{ {\left| \mathbf{R}_{m,m} \right|} \leq {{{P}_{t}}}/{M}, \ \forall m,}             \\
		& L(\mathbf{R}) \leq \xi, \\
		&\gamma_{k} \geq \Gamma, \ k=1, \ldots, K, 
	\end{align}
\end{subequations} \label{p19} 
\textcolor{black}{where $\mathcal{S}_{M}^{+}$ represents a set of positive semidefinite matrices of dimension $M$.}
	
	The optimization problem (21) is challenging due to the non-convexity introduced by the quadratic equality constraint in (21a). Since decision variables are coupled in the fractional form constraint (21d), problem (21) is challenging to be solved directly. Nonetheless, the resultant optimization problem can be efficiently solved when one of $\mathbf{W}$ and $\boldsymbol{\omega}$ is fixed. 
This motivates us to use an alternating optimization method that involves two iterative steps.	
\begin{enumerate}
	\item \textit{Receive Filter Design:} 
	The receive filter $\boldsymbol{\omega}$ optimization subproblem can be reduced to an MVDR problem for a given transmit beamforming. The difference is that the mismatched steering vectors are considered in this paper, so the interference signal matrix needs to be reconstructed. Therefore, this optimization problem first requires a modification of the optimization objective. Then, a filter design with enhanced robustness is obtained at the receiver.
	
	\item \textit{Transmit Beamforming Design:}
	Under the optimized receive filter in 1), the overall transmit beamforming $\mathbf{W}$ optimization subproblem is a non-convex fractional problem. Due to the different dimensions of sensing beamforming and communication beamforming, it is impractical to obtain both $\mathbf{W}_c$ and $\mathbf{W}_s$ simultaneously. Therefore, the quadratic term of the optimized communication beamforming is first optimized, and then the sensing beamforming is obtained by decomposition.
	
\end{enumerate}

The optimal solution is achieved through iterative optimization of the variables until convergence. The interference management effect becomes a closed loop at the BS transceiver through joint optimization. 
	
	\begin{algorithm}[t] 
		\caption{Robust Joint Transmit Beamforming and Receive Filter Design Algorithm}\label{algorithm}
		\KwIn{$\mathbf{A}\left(\theta_{0}\right)$, $\mathbf{A}\left(\theta_{p}\right)$, $\alpha_{0}$, $\alpha_{p}$, $\forall p$, $P_{t}$, $\mathbf{x}_{0}$, $\mathbf{h}_k$, $\sigma_k^2$, $\forall k$, $\sigma^2$, $\Gamma$, $\alpha$, $\Delta $, $\vartheta$.}
		\KwOut {$\boldsymbol{\omega}^{opt}, \mathbf{W}^{opt}$.}
		Initialize $\mathbf{R}^{(0)},\mathbf{R}_1^{(0)}, \ldots, \mathbf{R}_{K}^{(0)}, n=0$.\\ 
		
		\While{no convergence}{  
			Set $n=n+1$.\\
			
			Get the “tapered matrix” ${{\tilde{\Phi }}_{\text{cn}}}$  by (23).
			
			Calculate the optimal solution $\boldsymbol{\omega}^{(n+1)}$ by (25).\\
			
			\For {$i=1:iterations$}{Calculate auxiliary variable ${y}^{(n+1)}$ by (30).\\	
				
				Update the optimal $\hat{\mathbf{R}}^{(0)}, \hat{\mathbf{R}}_1^{(0)}, \ldots, \hat{\mathbf{R}}_{K}^{(0)}$ by solving the optimization problem (33).	
			}	
		}	
		Compute ${\tilde{\mathbf{w}}_{1}}, \ldots , {\tilde{\mathbf{w}}_{K}}$ via (37).
		
		Let $\mathbf{W}_c =[\tilde{\mathbf{w}}_{1}, \ldots , \tilde{\mathbf{w}}_{K} ]$.
		
		Compute ${\tilde{\mathbf{w}}_{K+1}},\ldots ,{\tilde{\mathbf{w}}_{K+M}}$ via (38).
		
		Let $\mathbf{W}_s =[\tilde{\mathbf{w}}_{K+1}, \ldots , \tilde{\mathbf{w}}_{K+M} ]$.
		
		Return the receive filter $\boldsymbol{\omega}^{opt}$ and overall beamforming matrix $\mathbf{W}^{opt}=[\mathbf{W}_c,\mathbf{W}_s].$
	\end{algorithm}
	 
	\section {Proposed Solutions for Robust Transceiver Design}
	In this section, we first tackle the robust receive filter design subproblem by considering the mismatched steering vector in Section \uppercase\expandafter{\romannumeral4}-A. 
	\textcolor{black}{In Section \uppercase\expandafter{\romannumeral4}-B, we introduce a subproblem for transmit beamforming and derive the communication beamforming and sensing beamforming matrices separately.} 

\subsection{The CMT-based Receive Filter Design}
\textcolor{black}{The stable covariance characteristics of fixed clutter in the environment simplify its estimation and elimination. However, mismatches between adaptive filters and signals pose a significant challenge in MIMO-ISAC systems \cite{qian2017null}. Expanding the nulling width to enhance the robustness of the adapted beampattern is a desirable goal.} 

To modify the clutter covariance matrix, the CMT technique introduced in \cite{guerci1999theory} is employed. Specifically, we use the MZ taper defined as follows.  
	\begin{equation}
		\mathbf{T}_{MZ}={{\left[ {{a}_{mn}} \right]}_{M\times M}}=\left\{ \operatorname{sinc}\left( \left( m-n \right)\vartriangle  \right) \right\}.
	\end{equation}
	$\mathbf{T}_{MZ}$ represents a matrix of size $M\times M$, where the $\left( m,n \right)$-th element is $\operatorname{sinc}\left( \left( m-n \right)\vartriangle  \right)$. The symbol $\Delta$ represents the dithering range, wherein the signal can be effectively suppressed and attenuated in the output through the design of the receiving filter, thereby forming broader nulls.
The choice of $\Delta$ is crucial to ensure that the taper operation effectively attenuates signals arriving at angles between adjacent clutter points. By increasing the value of $\Delta$, the null width formed in the clutter direction becomes wider, indicating a higher tolerance of mismatches. However, an excessively widened null region can adversely affect the SCNR at the sensing receiver, as discussed in the simulation results provided in Section \uppercase\expandafter{\romannumeral5}-D.  
	
	By using $\mathbf{T}_{MZ}$, the ``tapered matrix" is reformulated by 
\begin{equation}\label{equa21}
	{{\tilde{\Phi }}_{\text{cn}}}={{\Phi }_{\text{cn}}}\circ \mathbf{T}_{MZ}.
\end{equation}
As pointed out in \cite{guerci1999theory }, the MZ taper is equivalent to introducing a uniformly distributed	coherent phase dither. \textcolor{black}{ Hence, the power of each observation point can be adjusted within its neighborhood, which enables an increased nulling width. This feature effectively suppresses clutter even with errors in clutter angle directions.}

For a fixed transmit beamforming, the receive filter optimization subproblem can be reduced to 
\begin{equation}
	\underset{\boldsymbol{\omega }}{\mathop{\max }}\,\frac{{{\boldsymbol{\omega }}^{H}}\Phi (\mathbf{W})\boldsymbol{\omega }}{{{\boldsymbol{\omega }}^{H}} {{{\tilde{\Phi }}}_{\text{cn}}}(\mathbf{W})\boldsymbol{\omega }},
\end{equation} 
which refers to the MVDR problem. The design principle of this filter aims to transmit the signal of interest in the desired direction without distortion while minimizing the variance of the beamforming output noise, which is a generalized noise containing interference \cite{cox1987robust }. The optimal solution has the following form
\begin{equation}
	\boldsymbol{\omega }=f\left( {{\left( {{{\tilde{\Phi }}}_{\text{cn}}} \right)}^{-1}}\Phi  \right),
\end{equation}
where $ f (\cdot)$ defines the largest eigenvector corresponding to the generalized eigenvalue of the matrix.
This method utilizes the inverse of the sample covariance matrix for filter design. 
The CMT technique offers inherent low complexity by utilizing a single matrix Hadamard product and enhances the robustness, which has been successfully applied in adaptive beamforming \cite{guerci1999theory,zhang2015interference, ollila2022regularized}.

	\subsection{The SDR-QT based Transmit Beamforming Design}	
	The subproblem for a fixed linear receive filter is transformed into a non-convex fractional quadratically constrained quadratic programming (QCQP) problem. In this subsection,  the SDR approach  \cite{zhang2000quadratic } and QT approach \cite{ shen2018fractional,nesterov1998semidefinite} are employed to tackle the subproblem.

Since ${\mathbf{w}_{i}}$ is the $i$-th column of $\mathbf{W}$, then
\begin{equation}
	\mathbf{R}=\sum\limits_{i=1}^{M+K}{{{\mathbf{w}}_{i}}\mathbf{w}_{i}^{H}}=\sum\limits_{i=1}^{M+K}{{{\mathbf{R}}_{i}}},
\end{equation} 
where ${\mathbf{R}}_{i}\in\mathbb{C}^{M \times M}$ with $i\in \left[ 1, K+M \right]$ satisfies $\text{rank}\left( {{\mathbf{R}}_{i}} \right)=1$, i.e., ${\mathbf{R}}_{i}\in \mathcal{S}_{M}^{+}$. 
Accordingly, the optimization of $\mathbf{W}$ in problem (21) is equivalent to optimizing $\mathbf{R}$, which is further obtained through the decomposition.

The $k$-th row of $\mathbf{H}$, the QoS constraints of communication users can be transformed into linear form as follows.
	\textcolor{black}{
	\begin{equation}
		\begin{aligned}
			\gamma_k &=\frac{ \mathbf{h}_{k}^{H} \mathbf{w}_{k} \mathbf{w}_{k}^{H}  \mathbf{h}_{k}}{\sum_{i \neq k} \mathbf{h}_{k}^{H} \mathbf{w}_{i} \mathbf{w}_{i}^{H}  \mathbf{h}_{k}+\sigma_k^{2}} \\
			& =\frac{\mathbf{h}_k^H \mathbf{R}_k \mathbf{h}_k}{\sum_{1 \leq i \leq M+K, i \neq k} \mathbf{h}_k^H \mathbf{R}_i \mathbf{h}_k+\sigma_k^2} \\
			& =\frac{\mathbf{h}_k^H \mathbf{R}_k \mathbf{h}_k}{\mathbf{h}_k^H \mathbf{R }\mathbf{h}_k-\mathbf{h}_k^H 
				\mathbf{R}_k \mathbf{h}_k+\sigma_k^2} \geq \Gamma .
		\end{aligned}
	\end{equation}}
	By simple rearrangement, the communication constraint can be obtained as follows.
	\begin{equation}
	\textcolor{black}{	\left( 1+{{\Gamma }^{-1}} \right)\mathbf{h}_{k}^{H}{{\mathbf{R}}_{k}}\mathbf{h}\ge \mathbf{h}_{k}^{H}\mathbf{R}{\mathbf{h}_{k}}+{{\sigma }_k^{2}}, \ k=1,\ldots ,K.}
	\end{equation}
	
	Since the objective function is essentially a fractional optimization problem with a concave numerator and a convex quadratic term in the denominator, it is no longer a simple single-ratio optimization problem. Therefore, traditional fractional optimization algorithms such as Charnes-Cooper transform, and Dinkelbach's transform cannot guarantee that the optimal value of the transformed objective function is the same as that of the original fractional optimization problem\cite{shen2018fractional }. \textcolor{black}{In this regard, an iterative optimization methodology proves effective by alternately optimizing the original variables $\mathbf{R}$ and the auxiliary variable $y$. Thus, the optimization subproblem can be redefined by utilizing the following lemma of }
	\begin{subequations}\label{p27}
		\begin{align}
			\max_{\mathbf{R},\{\mathbf{R}_i\},y } & 2y\sqrt{{{\boldsymbol{\omega }}^{H}}\Phi (\mathbf{R})\boldsymbol{\omega }}-{{y}^{2}}{{\boldsymbol{\omega }}^{H}}{{\tilde{\Phi }}_{\text{cn}}}(\mathbf{R})\boldsymbol{\omega } \notag\\
			\text { s.t.} \ & \mathbf{R}=\sum_{i=1}^{M+K} \mathbf{R}_i,  \\
			&  \mathbf{R}_i \in \mathcal{S}_{M}^{+}, \   \mathrm{rank}\left(\mathbf{R}_i\right)=1, \ i=1, \ldots, K+M,  \\
			& \textcolor{black}{{\left| \mathbf{R}_{m,m} \right|} \leq {{{P}_{t}}}/{M}, \ \forall m, }\\
			& L(\mathbf{R}) \leq \xi, \\
			&\textcolor{black}{\left(1+\Gamma^{-1}\right) \mathbf{h}_k^H \mathbf{R}_k \mathbf{h} \geq \mathbf{h}_k^H \mathbf{R} \mathbf{h}_k+\sigma_k^2, \ \forall k.}
		\end{align}
	\end{subequations}

	\renewcommand{\qedsymbol}{\ensuremath{\blacksquare}}
	\newtheorem{lemma}{Lemma}
	\newenvironment{mylemma}[1][Lemma]{\par
		\normalfont\textit{#1}\quad}{\qed\par}
	
	\begin{lemma}
		\textit{ (Optimality Quadratic Transform) } 
		\upshape When $\mathbf{R}$ and $\boldsymbol{\omega }$ are held fixed, the optimal $y$ can be updated in the following closed form expression
		\begin{equation}
			{{y}^{*}}=\frac{\sqrt{{{\boldsymbol{\omega }}^{H}}\Phi (\mathbf{R})\boldsymbol{\omega }}}{{{\boldsymbol{\omega }}^{H}}{{{\tilde{\Phi }}}_{\text{cn}}}(\mathbf{R})\boldsymbol{\omega }}.
		\end{equation}
		
		The original fractional optimization objective function can be reformulated as follows according to QT,
		\begin{equation}\label{equa27}
			\textcolor{black}{g} \left( \mathbf{R},y \right)\ = 2y\sqrt{{{\boldsymbol{\omega }}^{H}}\Phi (\mathbf{R})\boldsymbol{\omega }}-{{y}^{2}}{{\boldsymbol{\omega }}^{H}}{{\tilde{\Phi }}_{\text{cn}}}(\mathbf{R})\boldsymbol{\omega },
		\end{equation}
		which is concave in $\mathbf{R}$ for a fixed $y$. The optimal solution to the problem (29) converges to the global optimal solution to the problem (21).
	\end{lemma}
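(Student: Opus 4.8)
The plan is to verify the three assertions in sequence, and I would begin with the closed form for $y^{*}$ by freezing $\mathbf{R}$ and $\boldsymbol{\omega}$ and regarding $g$ as a function of the scalar $y$ alone. Abbreviating $A \triangleq \boldsymbol{\omega}^H \Phi(\mathbf{R})\boldsymbol{\omega} \ge 0$ and $B \triangleq \boldsymbol{\omega}^H \tilde{\Phi}_{\text{cn}}(\mathbf{R})\boldsymbol{\omega}$, the additive noise term $\sigma^2\mathbf{I}$ inside $\Phi_{\text{cn}}$ guarantees $B>0$, so $g(y)=2y\sqrt{A}-y^2 B$ is a strictly concave quadratic in $y$. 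Setting $\partial g/\partial y = 2\sqrt{A}-2yB=0$ gives the unique maximizer $y^{*}=\sqrt{A}/B$, which is exactly the claimed expression, and back-substitution yields $g(\mathbf{R},y^{*}) = 2(\sqrt{A}/B)\sqrt{A}-(A/B^{2})B = A/B$. Thus at $y^{*}$ the transform reproduces the original SCNR ratio exactly; this tightness is the lever for the rest of the proof.

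Second, for concavity in $\mathbf{R}$ at fixed $y\ge 0$ I would exploit that both quadratic forms are affine in $\mathbf{R}$. Since $\Phi(\mathbf{R})=|\alpha_0|^2\mathbf{A}(\theta_0)\mathbf{R}\mathbf{A}^H(\theta_0)$, writing $\mathbf{u}=\mathbf{A}^H(\theta_0)\boldsymbol{\omega}$ gives $A(\mathbf{R})=|\alpha_0|^2\operatorname{tr}(\mathbf{R}\,\mathbf{u}\mathbf{u}^H)$, which is linear in $\mathbf{R}$; composing the concave, nondecreasing map $t\mapsto\sqrt{t}$ with this affine map and scaling by $2y\ge 0$ keeps the first term concave. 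For the denominator, $\Phi_{\text{cn}}(\mathbf{R})=\sum_{p}|\alpha_p|^2\mathbf{A}(\theta_p)\mathbf{R}\mathbf{A}^H(\theta_p)+\sigma^2\mathbf{I}$ is affine in $\mathbf{R}$, and the Hadamard product with the fixed taper $\mathbf{T}_{MZ}$ acts entrywise, so $\tilde{\Phi}_{\text{cn}}(\mathbf{R})$ is again affine; hence $-y^2 B(\mathbf{R})$ is affine and a fortiori concave. The sum of the two concave terms is concave, as required.

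Third, to link (29) to (21) I would apply the step-one identity pointwise: for every feasible $\mathbf{R}$ one has $\max_{y} g(\mathbf{R},y)=A(\mathbf{R})/B(\mathbf{R})=\rho(\mathbf{R})$, whence $\max_{\mathbf{R},y} g(\mathbf{R},y)=\max_{\mathbf{R}}\rho(\mathbf{R})$ over the shared feasible set, so the two problems have the same optimal value and the same optimal $\mathbf{R}$. The algorithm is then block coordinate ascent: the $y$-update is the closed-form global maximizer of step one and, by step two, the $\mathbf{R}$-update solves a concave program to global optimality, so the objective is monotonically nondecreasing; it is bounded above because the per-antenna power constraint bounds $A(\mathbf{R})$ while $B(\mathbf{R})\ge\sigma^2\|\boldsymbol{\omega}\|^2$, and hence the iterates converge. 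Tightness at each $y$-update makes the iterate value coincide with $\rho$ along the sequence, so the limit is a stationary point of $\rho$; the affine numerator and denominator make $\rho$ a quasiconcave (linear-fractional) function, which upgrades the stationary point to the global optimum.

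I expect the third step, namely promoting convergence from a stationary point to the global optimum, to be the main obstacle, since monotone ascent by itself delivers only a stationary limit. The remedy is the combination of the exact QT tightness with the affine structure of $A$ and $B$ that renders $\rho$ quasiconcave; by contrast, the concavity in step two and the one-line optimization in step one are essentially routine compositions and elementary calculus.
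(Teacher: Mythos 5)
Your proof is correct in substance and follows essentially the same route as the paper, whose own proof is a three-line sketch that defers the details to the quadratic-transform reference it cites; you have reconstructed that argument explicitly (closed-form $y$-update by maximizing a concave quadratic, the tightness identity $g(\mathbf{R},y^{*})=A(\mathbf{R})/B(\mathbf{R})=\rho(\mathbf{R})$, concavity in $\mathbf{R}$ by composing $\sqrt{\cdot}$ with the affine maps $A(\mathbf{R})$ and $B(\mathbf{R})$, and a monotone-ascent/MM convergence argument). Two refinements are needed to make the last step airtight. First, quasiconcavity alone does not promote a stationary point to a global maximizer (e.g.\ $x\mapsto x^{3}$ is quasiconcave with a non-optimal stationary point at the origin); what actually rescues the claim is that $\rho=A/B$ with $A$ linear and $B$ affine and positive is \emph{pseudoconcave} (indeed pseudolinear) on the feasible set, and for pseudoconcave objectives over convex sets every KKT point is a global maximizer --- that is the property you should invoke rather than quasiconcavity. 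Second, your assertion that each $\mathbf{R}$-update is solved to global optimality presupposes a convex feasible set, which holds only after the rank-one constraints in (29b) are dropped; strictly speaking your convergence argument applies to the SDR-relaxed problem (33), and the passage back to (29) is exactly what the paper's Lemma 2 supplies. Neither point changes the conclusion, but both should be stated explicitly.
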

	
	\renewenvironment{proof}[1][\proofname]{\par
		\normalfont\textit{#1}\quad}{\qed\par}
	\begin{proof}   
		With fixed vectors $y$ and $\boldsymbol{\omega }$, where $\Phi (\mathbf{R})$ is a concave function and ${{\tilde{\Phi }}_{\text{cn}}}(\mathbf{R})$ is a convex function. The square root function is a monotonically increasing concave function, and the restated objective function becomes a concave maximization problem over $\mathbf{R}$. The optimal $\mathbf{R}$ can be efficiently obtained through numerical convex optimization. By iteratively updating $y$, the algorithm converges to the global optimal solution for the problem (21).
	Readers could refer to \cite{shen2018fractional} for detailed proof.
	\end{proof}
	
	The problem (29) is still non-convex due to the rank-one constraint (29b). Thus, SDR can be adopted to relax this constraint, and the problem (29) can be relaxed as
	\begin{subequations}\label{p30}
		\begin{align}
			\max_{\mathbf{R}, \{\mathbf{R}_i\},y } & \textcolor{black}{g} (\mathbf{R}, y) \notag\\
			\text { s.t.} \ & \mathbf{R}=\sum_{i=1}^{M+K} \mathbf{R}_i, \\ 
			& \mathbf{R}_i \in \mathcal{S}_{M}^{+}, \ i=1, \ldots, K+M, \\
			& \textcolor{black}{{\left| \mathbf{R}_{m,m} \right|} \leq {{{P}_{t}}}/{M}, \ \forall m, }\\
			& L(\mathbf{R}) \leq \xi, \\
			& \textcolor{black}{\left(1+\Gamma^{-1}\right) \mathbf{h}_k^H \mathbf{R}_k \mathbf{h} \geq \mathbf{h}_k^H \mathbf{R} \mathbf{h}_k+\sigma_k^2, \ \forall k.}
		\end{align}
	\end{subequations}
	
	Since the dimensions of $\mathbf{W}, {{\mathbf{W}}_{c}},{{\mathbf{W}}_{s}}$ are different, it is difficult to handle even if the rank-one constraint is ignored. A feasible solution is to perform matrix decomposition on $\{{\mathbf{R}}_{k}\}$ with $k=1,\ldots, K$ and $\mathbf{R}-\sum\limits_{i=1}^{K}{{{\mathbf{R}}_{i}}}$ separately, where the former obtains the communication beamforming matrix and the latter obtains the sensing beamforming matrix \cite{ liu2020joint }. Temporarily ignoring the influence of variable ${{\left\{ {{\mathbf{R}}_{i}} \right\}}_{i\ge K+1}}$ in optimizing $\{{\mathbf{R}}_{k}\}$ can reduce the number of optimization variables without losing communication performance constraints. i.e.,
	\begin{subequations}\label{p30}
		\begin{align}
			\max _{\mathbf{R}, \mathbf{R}_1, \ldots \mathbf{R}_K, y} & \textcolor{black}{g} (\mathbf{R}, y)  \notag\\
			\text { s.t. } & \mathbf{R} \in \mathcal{S}_{M}^{+}, \ \mathbf{R}-\sum_{k=1}^K \mathbf{R}_k \in \mathcal{S}_{M}^{+} \\
			& \mathbf{R}_k \in \mathcal{S}_{M}^{+}, \ k=1, \ldots, K, \\
			& \textcolor{black}{{\left| \mathbf{R}_{m,m} \right|} \leq {{{P}_{t}}}/{M}, \ \forall m, }\\
			&  L(\mathbf{R}) \leq \xi, \\
			&\textcolor{black}{\left(1+\Gamma^{-1}\right) \mathbf{h}_k^H \mathbf{R}_k \mathbf{h} \geq \mathbf{h}_k^H \mathbf{R} \mathbf{h}_k+\sigma_k^2, \ \forall k.}
		\end{align}
	\end{subequations}
	
	Thus, the number of optimization variables has decreased to $K$+1. 
Problem (33) is known as an SDR of (29) because (33) is an instance of semidefinite programming (SDP) and can be solved using a generic SDP solver\cite{2012On}. The relaxed optimization model (33) is a convex QCQP because the objective function is a positive-semidefinite quadratic form, and all the constraints are either linear or semidefinite. 
Then the global solution of $\tilde{\mathbf{R}},{{\tilde{\mathbf{R}}}_{1}},\ldots {{\tilde{\mathbf{R}}}_{K}}$ can be obtained in polynomial time with convex optimization toolboxes. The tightness of SDR for the problem (33) can be guaranteed according to the following lemma.

\begin{lemma}
	\textit{ (Tightness of SDR)}
	\upshape After applying SDR to the problem (29), the problem (33) is convex, and the optimal solution 	$\hat{\mathbf{R}},{{\hat{\mathbf{R}}}_{1}},\ldots {{\hat{\mathbf{R}}}_{K}}$ can be used to find a rank-one optimal solution $\tilde{\mathbf{R}},{{\tilde{\mathbf{R}}}_{1}},\ldots {{\tilde{\mathbf{R}}}_{K}}$. 
\end{lemma}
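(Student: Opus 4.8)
The plan is to establish tightness by an explicit rank-one reconstruction: starting from any optimal solution $\hat{\mathbf{R}},\hat{\mathbf{R}}_1,\ldots,\hat{\mathbf{R}}_K$ of the relaxed problem (33), I will build a feasible point $\tilde{\mathbf{R}},\tilde{\mathbf{R}}_1,\ldots,\tilde{\mathbf{R}}_K$ in which every $\tilde{\mathbf{R}}_k$ has rank one and which attains the same objective value, so that this rank-one point is itself optimal. The key structural observation I would exploit is that the objective $g(\mathbf{R},y)$, together with the per-antenna power constraint (33c) and the similarity constraint (33d), depends on the decision variables only through the aggregate covariance $\mathbf{R}$. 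I would therefore keep $\tilde{\mathbf{R}}=\hat{\mathbf{R}}$, which instantly preserves the objective value and constraints (33c)--(33d), and concentrate the entire argument on reconstructing the per-user matrices $\tilde{\mathbf{R}}_k$.

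For the reconstruction I would use the standard downlink rank reduction of \cite{liu2020joint}, setting
\begin{equation}
	\tilde{\mathbf{R}}_k=\frac{\hat{\mathbf{R}}_k \mathbf{h}_k \mathbf{h}_k^H \hat{\mathbf{R}}_k}{\mathbf{h}_k^H \hat{\mathbf{R}}_k \mathbf{h}_k}, \quad k=1,\ldots,K.
\end{equation}
This is well defined because the QoS constraint (33e) forces $\mathbf{h}_k^H \hat{\mathbf{R}}_k \mathbf{h}_k>0$ (its right-hand side is at least $\sigma_k^2>0$), and by inspection $\tilde{\mathbf{R}}_k\in\mathcal{S}_M^{+}$ with rank one. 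A direct substitution then gives $\mathbf{h}_k^H \tilde{\mathbf{R}}_k \mathbf{h}_k=\mathbf{h}_k^H \hat{\mathbf{R}}_k \mathbf{h}_k$; since $\tilde{\mathbf{R}}=\hat{\mathbf{R}}$ as well, both sides of (33e) are unchanged and the communication QoS constraint is preserved exactly for every $k$.

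The remaining and main obstacle is to verify the coupling constraint in (33a), namely $\tilde{\mathbf{R}}-\sum_{k=1}^K \tilde{\mathbf{R}}_k\succeq \mathbf{0}$, which links the reconstructed communication part to the sensing residual. Here I would invoke the matrix inequality $\tilde{\mathbf{R}}_k\preceq \hat{\mathbf{R}}_k$. Factoring $\hat{\mathbf{R}}_k=\hat{\mathbf{R}}_k^{1/2}\hat{\mathbf{R}}_k^{1/2}$ with the positive-semidefinite square root and setting $\mathbf{u}_k=\hat{\mathbf{R}}_k^{1/2}\mathbf{h}_k$ (so that $\mathbf{u}_k^H\mathbf{u}_k=\mathbf{h}_k^H\hat{\mathbf{R}}_k\mathbf{h}_k>0$), one can write $\tilde{\mathbf{R}}_k=\hat{\mathbf{R}}_k^{1/2}\bigl(\mathbf{u}_k\mathbf{u}_k^H/\mathbf{u}_k^H\mathbf{u}_k\bigr)\hat{\mathbf{R}}_k^{1/2}$; the inner factor is an orthogonal projector and hence $\preceq\mathbf{I}_M$, which yields $\tilde{\mathbf{R}}_k\preceq\hat{\mathbf{R}}_k$ without requiring $\hat{\mathbf{R}}_k$ to be invertible. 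Summing over $k$ and using that $\hat{\mathbf{R}}-\sum_k\hat{\mathbf{R}}_k\succeq\mathbf{0}$ already holds for the optimal relaxed solution gives
\begin{equation}
	\tilde{\mathbf{R}}-\sum_{k=1}^K\tilde{\mathbf{R}}_k=\hat{\mathbf{R}}-\sum_{k=1}^K\tilde{\mathbf{R}}_k\succeq\hat{\mathbf{R}}-\sum_{k=1}^K\hat{\mathbf{R}}_k\succeq\mathbf{0}.
\end{equation}

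Thus $\tilde{\mathbf{R}},\tilde{\mathbf{R}}_1,\ldots,\tilde{\mathbf{R}}_K$ is feasible for (33), attains the same objective value as the optimal $\hat{\mathbf{R}},\hat{\mathbf{R}}_1,\ldots,\hat{\mathbf{R}}_K$, and has rank-one per-user matrices, so it is an optimal rank-one solution and the relaxation is tight. The sensing beamformer is then recovered by factorizing the guaranteed positive-semidefinite residual $\tilde{\mathbf{R}}-\sum_{k}\tilde{\mathbf{R}}_k$ into $M$ columns, exactly as in Algorithm \ref{algorithm}. I expect the positive-semidefinite ordering $\tilde{\mathbf{R}}_k\preceq\hat{\mathbf{R}}_k$ to be the only delicate step, since everything else follows from the objective depending on $\mathbf{R}$ alone; a secondary point worth stating carefully is the nondegeneracy $\mathbf{h}_k^H \hat{\mathbf{R}}_k \mathbf{h}_k>0$ that makes the reconstruction well posed.
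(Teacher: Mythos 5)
Your proposal is correct and follows essentially the same route as the paper: keep $\tilde{\mathbf{R}}=\hat{\mathbf{R}}$, apply the rank-one reconstruction $\tilde{\mathbf{R}}_k=\hat{\mathbf{R}}_k \mathbf{h}_k \mathbf{h}_k^H \hat{\mathbf{R}}_k/(\mathbf{h}_k^H \hat{\mathbf{R}}_k \mathbf{h}_k)$ from \cite{liu2020joint}, and check feasibility and objective invariance. In fact you supply the two details the paper leaves implicit --- the nondegeneracy $\mathbf{h}_k^H \hat{\mathbf{R}}_k \mathbf{h}_k>0$ and the projector argument giving $\tilde{\mathbf{R}}_k\preceq\hat{\mathbf{R}}_k$, which is what actually guarantees $\tilde{\mathbf{R}}-\sum_k\tilde{\mathbf{R}}_k\in\mathcal{S}_M^{+}$ --- whereas the paper's displayed identity $\tilde{\mathbf{R}}_k=\hat{\mathbf{R}}_k$ is loose (it holds only when $\hat{\mathbf{R}}_k$ is already rank one); your version is the more rigorous statement of the same argument.
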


\begin{proof}
	\textcolor{black}{After obtaining the optimal solution
	$\hat{\mathbf{R}},{{\hat{\mathbf{R}}}_{1}}, \ldots,  {{\hat{\mathbf{R}}}_{K}}$, the covariance matrix of the transmit signal satisfies $\tilde{\mathbf{R}} = \hat{\mathbf{R}}$ without the rank-one constraint. Therefore, it is necessary to demonstrate that the covariance matrix of the communication signals in $\tilde{\mathbf{R}}_k$ satisfies all constraints outlined in problem (33). } 
\textcolor{black}{	By constructing }
	\begin{equation}
		{{\mathbf{\tilde{R}}}_{k}}=\frac{{{{\mathbf{\hat{R}}}}_{k}}{{\mathbf{h}}_{k}}\mathbf{h}_{k}^{H}\mathbf{\hat{R}}_{k}^{H}}{\mathbf{h}_{k}^{H}{{{\mathbf{\hat{R}}}}_{k}}{{\mathbf{h}}_{k}}}, \ \forall k,	
	\end{equation}
	\textcolor{black}{we can derive that }
	\begin{equation}
		\tilde{\mathbf{R}}_k = \tilde{\mathbf{w}}_k \tilde{\mathbf{w}}_k^H = \hat{\mathbf{R}}_k. 
	\end{equation}
	By substituting (35) into (33a) and (33e), we can obtain 
	\begin{equation}
		\left(1+\Gamma^{-1}\right) \mathbf{h}_k^H \tilde{\mathbf{R}}_k \mathbf{h} \geq \mathbf{h}_k^H \tilde{\mathbf{R}} \mathbf{h}_k+\sigma^2, \ \forall k,
	\end{equation}
	where $\tilde{\mathbf{R}}_k \in \mathcal{S}_{M}^{+}$ and $\tilde{\mathbf{R}}-\sum_{k=1}^K \tilde{\mathbf{R}}_k \in \mathcal{S}_{M}^{+}$. 
	\textcolor{black}{This indicates that $\tilde{\mathbf{R}}_k$ is also an optimal solution to the problem (33), which completes the proof [14]. }
\end{proof}	
	
	\textcolor{black}{
	Then, ${\tilde{\mathbf{w}}_{1}},\ldots ,{\tilde{\mathbf{w}}_{K}}$ can be derived due to (34) as 
	\begin{equation}
		{\tilde{\mathbf{w}}_{k}}={{\left( \mathbf{h}_{k}^{H}{\hat{\mathbf{R}}_{k}}{\mathbf{h}_{k}} \right)}^{-{1}/{2}\;}}{\hat{\mathbf{R}}_{k}}{\mathbf{h}_{k}},
	\end{equation}
	with $k = 1,\ldots ,K$.}  
Moreover, matrices $\{\tilde{\mathbf{R}}_i\}_{i\geqslant {K +1}}$ are calculated by the Cholesky decomposition \cite{ zhang2017matrix} according to the following relationship
	\begin{equation}\label{32}
		{{\mathbf{W}}_{s}}{{\mathbf{W}}_{s}}^{H}=\tilde{\mathbf{R}}-\sum\limits_{k=1}^{K}{{\tilde{\mathbf{w}}_{k}}\tilde{\mathbf{w}}_{k}^{H}}.
	\end{equation} 
	In other words, the $\mathbf {W}_c$ and $\mathbf {W}_s$ can be obtained by (37) and (38), respectively.	
	Joint optimization offers the dual advantages of interference suppression at both the transmitter and receiver compared to separate optimization.

\subsection{Algorithm Summary and Computational Complexity Analysis}
\textcolor{black}{ The robust joint transmit beamforming and receive filter algorithm is summarized in Algorithm 1, where we terminate the algorithm if
	\begin{equation}
		\frac{|{{\rho }^{i}}-{{\rho }^{i-1}}|} {{{\rho }^{i-1}}}<\vartheta,
	\end{equation}
	where $\vartheta$ is predefined small values (e.g., $10^{-3}$).	 
	The computational complexity of Algorithm 1 is determined by the number of iterations and the complexity at each iteration. 
	We present the computational complexity for each iteration in Table \ref{complexity}, where $N_O$ and $N_I$ denote the number of outer and inner iterations for the proposed algorithm to reach convergence, respectively. }
 	 
\setlength{\extrarowheight}{5.6pt}
\begin{table*}[htbp]\color{black}
	\centering
	\caption { Computation Complexity Analysis}
	\label{complexity}
	\begin{tabular}{|c|p{3.5cm}|p{6cm}|}
		\hline
	    & \textbf{Computation}& \textbf{Complexity}  \\ \hline			
		\multirow {3} {*} {Update $\mathbf{w}$ } 
		& $\Phi$              &  $\mathcal{O}\left( {{N}^{2}}{{M}^{2}} \right)$    \\ \cline{2-3}
		& ${{\left( {{{\tilde{\Phi }}}_{\text{cn}}} \right)}^{-1}}$ & $\mathcal{O}\left( {{N}^{3}}\left( {{M}^{2}}+{{M}^{3}} \right) \right)$ \\ \cline{2-3} 
		& $\mathbf{w}$        & $\mathcal{O}\left( {{N}^{2}}{{M}^{2}} \right)$   \\ \hline 
		
		\multirow {2} {*} {Update $\mathbf{W}$}
		&  $y^*$              &   $\mathcal{O}\left( {{M}^{2}} \right)$ \\ \cline{2-3} 
		& $\mathbf{W}$        &  $\mathcal{O}\left( K\left( NM+{{\left( NM \right)}^{2}}+{{\left( NM \right)}^{3}} \right) \right)$ \\ \hline
		
		\multirow{1}{*}{Total}               
		& \multicolumn{2}{c|} %
		{$\mathcal{O}(N^3 M^3)$ + $\mathcal{O}\left( N_O N_I \left( K\left( N M + \left( N M \right)^2 + \left( N M \right)^3 \right) \right) \right)$} \\ \hline

	\end{tabular}
\end{table*}
	 
	\section{Simulation Results}	
	In this section, we numerically evaluate the performance of the proposed joint transceiver design methods. The scenario in simulation is depicted in Fig. 1. Firstly, Section \uppercase\expandafter{\romannumeral5}-A summarizes the simulation settings. Then, in Section \uppercase\expandafter{\romannumeral5}-B, the convergence of the algorithm is analyzed. Subsequently, in Section \uppercase\expandafter{\romannumeral5}-C, the effectiveness of joint optimization design in clutter suppression is analyzed under different numbers of antennas. The superiority is demonstrated by comparing the method without considering mismatches. Section \uppercase\expandafter{\romannumeral5}-D analyzes the impact of various system parameters on the trade-off between communication and sensing. 

\subsection{Simulation Setup}
The simulation settings are summarized as follows unless otherwise specified. In this section, a collocated narrowband MIMO-ISAC BS configuration is considered, where both the transmitter and receiver utilize ULAs with an equal number of antennas. 
\textcolor{black}{In particular, it is supposed that the target of interest is located at ${{\theta }_{0}}={{0}^{\circ }}$. Two types of clutter are taken into consideration, including a point clutter located at an azimuthal angle of \textcolor{black}{${{50}^{\circ }}$ and a densely distributed clutter block spanning from ${{-26}^{\circ }}$ to ${{-30}^{\circ }}$} in the azimuthal direction.} The clutter power is equivalent in each direction while the total transmit power is ${{P}_{t}}= 43$ \ \text {dBm}. The communication channel is modeled in terms of a flat Rayleigh fading.
The number of communication users is assumed to be $K$ = 2, and the similarity coefficient is set to $\alpha =1$. 
The linear frequency modulated (LFM) waveform is a commonly employed reference waveform owing to its favorable characteristics in pulse compression, ambiguity mitigation, and the ability to effectively distinguish point targets \cite{ cui2013mimo, richards2014fundamentals}. \textcolor{black}{ For a time index $n$, the reference waveform can be formulated as
	\begin{equation}
		{\mathbf{x}_{0}}[n]=  { \sqrt{ \frac{{P}_{t}} {M}}} {[1,\ldots, e^{j \pi {( m-1)^2 }/{M}}, \ldots,  e^{j \pi {( M-1)^2 }/{M}} ] }, \forall m.  
	\end{equation} 
	Thus, the covariance matrix of the reference waveform can be obtained by $ \mathbf{R}_0 =\mathbb{E}\left[\mathbf{x}_0 [n]\mathbf{x}^H_0[n] \right] $, where ${\mathbf{x}_{0}}[n] \in \mathbb{C}^{{M}\times {1}}$ represents the $M$ elements of reference waveform at time index $n$. }
The performance of the ISAC system is evaluated by averaging the results of $10^3$ Monte Carlo simulations for every value of $M$ and $K$. The transmit ISAC signal, which includes both sensing waveforms and communication symbols, is generated using random quadrature-phase-shift keying modulation. The block size of the ISAC signal is set to $N$=1024. The MATLAB CVX toolbox is used to solve the problem (31).  \textcolor{black}{Unless specified otherwise, the parameters are summarized in Table IV.}

\begin{table}[t] \color{black}
	\centering
	\setlength{\extrarowheight}{2pt}
	\caption {SIMULATION PARAMETERS}
	\label{table2}
	\begin{tabular}{l|l} 
		\bottomrule  
		\textbf{Parameters}                & \textbf{Value}  \\ \hline	
	The number of antennas for the ISAC BS & $N_t = N_r = M = 16 $  \\ \hline
		The number of communication UEs    & $K = 2$          \\ \hline 
		Location of the target             & $\theta_{0} = {{0}^{\circ }} $ \\ \hline
		Location of the clutter            & $\theta_{p} \in [{50}^{\circ} , {-26}^{\circ} \sim {-30}^{\circ}]  $      \\ \hline
		Total transmit power of ISAC BS    & ${{P}_{t}}= 43$ \ \text {dBm} \\ \hline
		Noise power at user $k$            & $\sigma_{k}^{2} =-80 \ \text {dBm} $  \\ \hline
		Noise power as the receiving BS    & $\sigma^{2} = -80 \ \text {dBm} $ \\ \hline
		Channel power gain                 & $ |\alpha_{0}|^2 = 1$  \\ \hline
		Error tolerance factor             & $ \vartheta  = 10^{-3}$  \\ \hline
	The SINR threshold of communication UE & ${\Gamma } = 5 \ \text{dB}$  \\ \hline 
	The similarity coefficient             & $\alpha =1$            \\ \hline
		The width of beampattern nulls     & $ \Delta$ = 0.03          \\ \toprule 
	\end{tabular}
\end{table}

	\subsection{Algorithm Investigation}
	
	Firstly, we focus on assessing the convergence of the proposed joint algorithm. To characterize the convergence behavior of sensing performance,  the MSE between the obtained beampatterns and the desired beampatterns is used to measure the sensing performance, as illustrated in Fig. 2.
The beampattern resulting from the joint design of transmit beamforming and receive filters can be described as the receive signal power in the direction $\theta \in \left[ -\frac{\pi }{2},\frac{\pi }{2} \right)$, denoted by
\begin{equation}
	P\left( \theta ,\hat{\mathbf{R}} \right) ={{\left| {{\boldsymbol{\omega }}^{H}}\mathbf{A}\left( \theta  \right)\mathbf{x} \right|}^{2}}.	
\end{equation} 
Thus, for $N$ sampled angle grids, the MSE can be defined as follows.
\begin{equation}
	\text{MSE}=\frac{1}{N}\sum\limits_{l=1}^{N}{{{\left| P\left( {{\theta }_{l}};{{\mathbf{R}}_{0}} \right)-P\left( {{\theta }_{l}};\hat{\mathbf{R}} \right) \right|}^{2}}},
\end{equation}
where $\mathbf{R}_0$ represents the covariance matrix of the reference waveform.

\textcolor{black}{ As the number of antennas increases, the receiver benefits from greater spatial diversity provided by multiple receiving antennas, resulting in improved interference suppression and enhanced extraction of target echoes. Moreover, a larger number of antennas provides increased degrees of freedom, enabling the transmission of a more directional and concentrated beam while dispersing interference energy more effectively, thereby improving sensing performance. Consequently, as the number of antennas increases, the MSE converges to a lower value, signifying improved sensing accuracy in the ISAC system. }

	\begin{figure}
		\centerline{\includegraphics [scale=0.6]{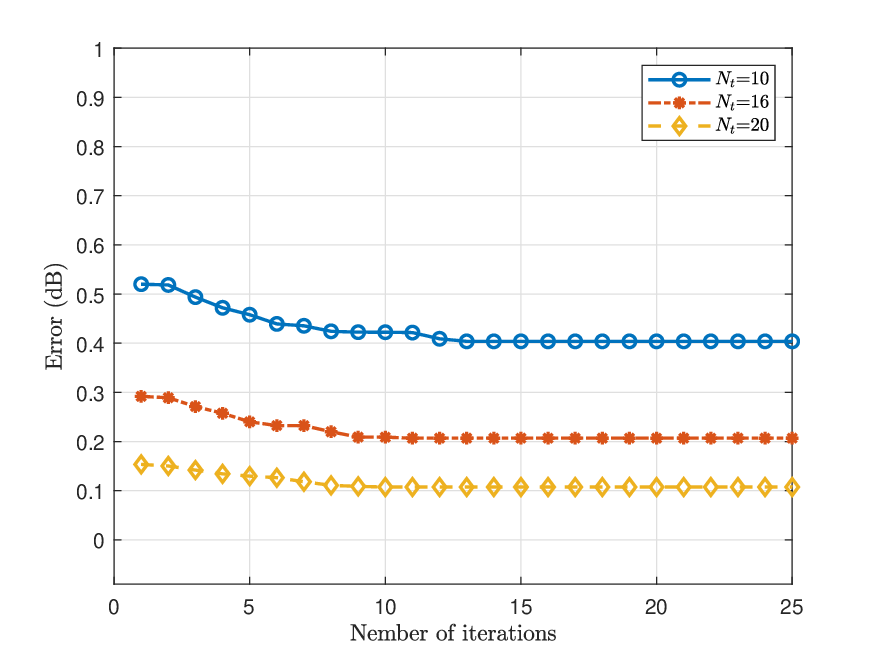}}	
		\caption{\textcolor{black}{Convergence performance of the MSE of beampatterns versus the iteration number for different numbers of the antennas, $N_{t}$ = 10, 16, 20. }}
		\label{Fig. 2}
	\end{figure}

	\subsection{Clutter Suppression}

	The interference suppression performance of the proposed joint optimization algorithm is demonstrated in Fig. 3 and Fig. 4. 
	\textcolor{black}{When a mismatched signal steering vector exists in the clutter direction of the received array response, the clutter covariance matrix can be rewritten by the CMT approach. }In Fig. 3, we compare our proposed transceiver scheme with the constant covariance matrix of clutter as studied in \cite{chen2022generalized} against the optimized robust transceiver design proposed in this paper. The effects of clutter suppression with and without introducing the MZ taper using different numbers of antennas are illustrated in Fig. 4.

\textcolor{black}{ Fig. 3 evaluates two robust transceiver designs in terms of their performance in target detection with $N_t = 16$. By utilizing the reference waveforms $\mathbf{X}_0$, optimal beampatterns from the radar's perspective are provided as a performance benchmark. Although there is a slight reduction in the peak-to-sidelobe ratio, the main beam width remains constant, indicating closely matched optimized beampatterns that validate the effectiveness of our proposed algorithm. 
In contrast, the proposed algorithm demonstrates quicker sidelobe convergence and enhanced clutter nulling, affirming its information transmission capability while maintaining ideal radar beampatterns. Notably, the extended clutter interaction range suggests a higher tolerance for mismatched steering vectors compared to the constant clutter described in [17].}

Fig. 4 (a-c) illustrate the receive beampatterns under mismatched steering vectors for antenna numbers of 10, 16, and 20, respectively. In contrast to the received beampatterns shown in Fig. 4 (d-f) that do not consider mismatched vectors, our proposed joint design approach exhibits notable advantages in effectively suppressing clutter. On one hand, the joint optimization algorithm demonstrates enhanced clutter suppression capability, leading to a reduction of null depth at clutter directions by approximately 10 dB. Particularly, it significantly enhances robustness against mismatches in the steering vector, emphasizing its potential as a dependable and efficient solution in practical clutter suppression scenarios. On the other hand, through a comparison of the receive beampatterns for varying numbers of antennas, it is apparent that the array output gain increases with an increasing number of antennas, resulting in more pronounced clutter suppression. \textcolor{black}{Fig. 4 demonstrates the effectiveness of the CMT technique in mitigating clutter despite mismatched steering vectors in ISAC systems.}
	
	\begin{figure}
		\centerline{\includegraphics [scale=0.65]{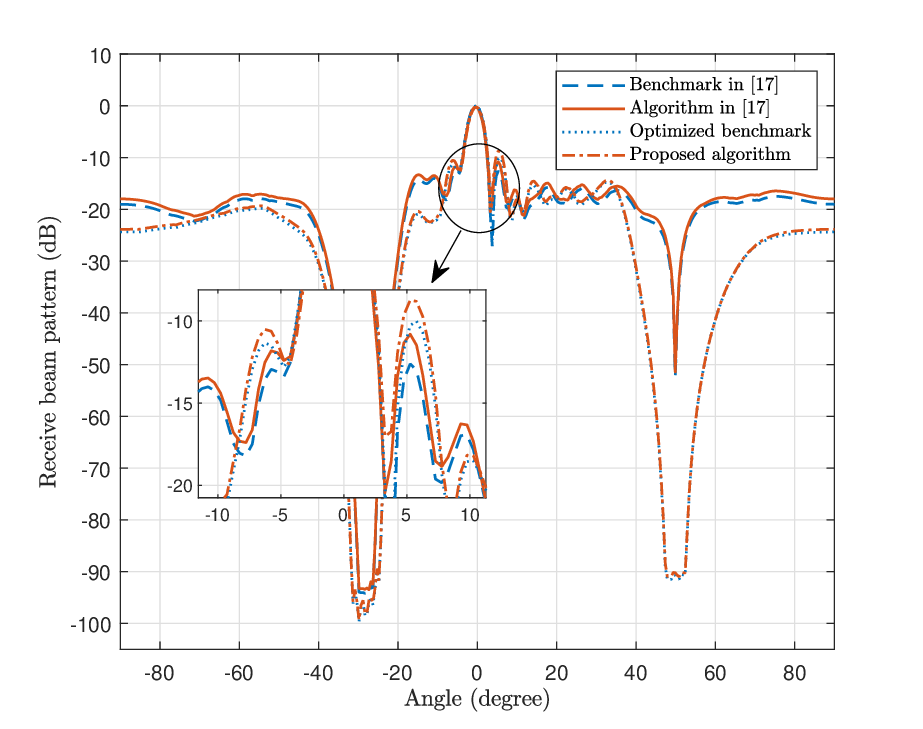}}	
		\caption{ \textcolor{black}{ Comparison of receive beampattern between the transceiver design using constant clutter in [17] and the proposed robust transceiver design incorporating signal-dependent clutter.} }
		\label{Fig. 3}
	\end{figure}
	
	\begin{figure}
		\centerline{\includegraphics [scale=0.55]{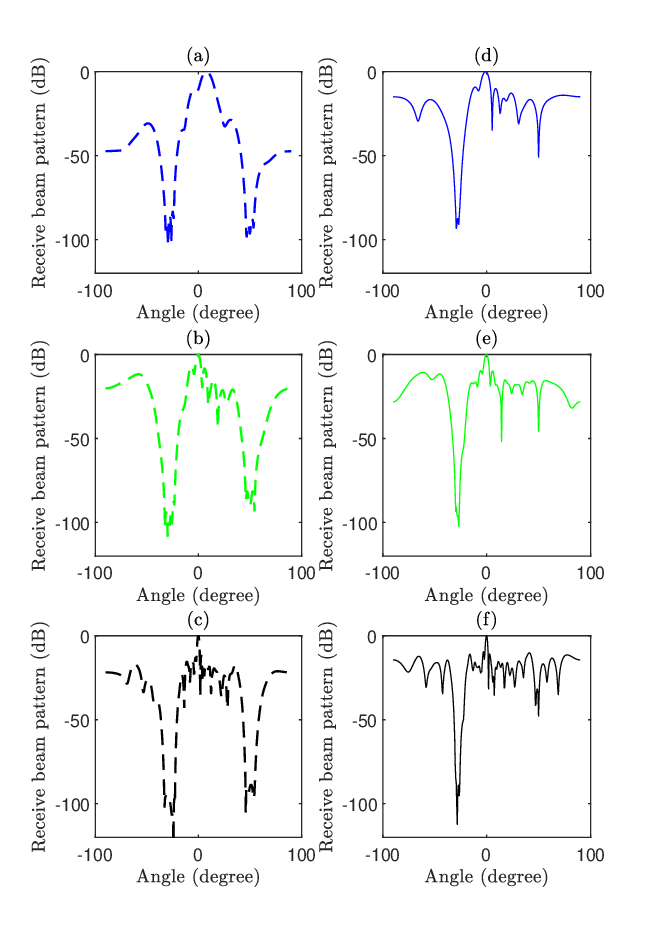}}
		\caption{\textcolor{black}{Comparison of the impact of mismatched steering vectors on the receive beampatterns for different values of $N_t$ is conducted under two conditions: (a-c) Applying the proposed algorithm with MZ taper for $N_t$ = 10, 16, and 20, respectively. (d-f) Employing the method described in [22] without considering mismatched steering vectors for $N_t$ = 10, 16, and 20, respectively. }}  
		\label{Fig.4} 
	\end{figure}

\textcolor{black}{It is essential to highlight that the parameter setting of $\vartriangle$ significantly affects the effectiveness of clutter suppression. Increasing the value of $\vartriangle$ expands the dithering area, which improves error tolerance. However, this expansion results in a reduction in the output SCNR for sensing. The influence of different parameters on communication and sensing will be thoroughly analyzed in subsection V-D. }
	
	\subsection{Parameter Analysis}
	
	The correlation between parameter variations and the interplay between sensing and communication is critical for effective clutter suppression. Fig. 5 demonstrates the trends of the sensing output SCNR and communication sum rate as the parameter $\Delta$ changes where the achievable sum rate is assessed by
	\begin{equation}
		C\left( \gamma  \right)=\sum\limits_{k=1}^{K}{{{\log }_{2}}\left( 1+{{\gamma }_{k}} \right)}.
	\end{equation}

In Fig. 5, as the parameter $\Delta$ increases, the suppression effectiveness against point clutter extends to neighboring angular domains, enhancing robustness against interference at the expense of a partial loss in echo gain. With more transmit antennas, the spatial gain offsets power loss, leading to a slower decrease in SCNR.
Furthermore, Fig. 5 illustrates that adjusting parameter $\Delta$ does not impact the communication sum rate. This observation stems from the fact that environmental noise or clutter does not directly affect communication users. 
\textcolor{black}{ Furthermore, when the number of antennas increases to 20, with $\Delta$ set to 0.06, the echo intensity can still be maintained above 25 dB. However, when the number of antennas decreases to 12, achieving the same level of echo intensity requires $\Delta$ to be no more than 0.035. In other words, increasing spatial degrees of freedom can enhance robustness in clutter suppression.}

	\begin{figure}
		\centerline{\includegraphics [scale=0.6]{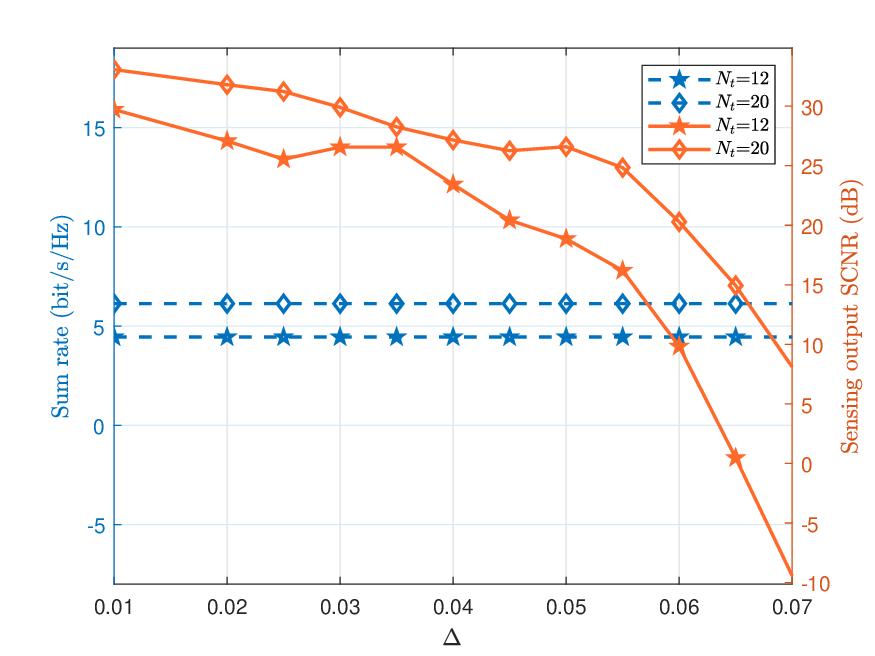}}
		\caption{ \textcolor{black}{ Sensing output SCNR and communication sum rate versus $\Delta $, for $N_t$ = 12, 20. }}
		\label{Fig.5}
	\end{figure}
	
\textcolor{black}{ In Fig. 6, a detailed comparison of the impact of different antenna numbers on sensing output SCNR reveals that before a rapid decline, the sensing SCNR experiences a slight increase, and this increase corresponds to larger values of $\Delta$ with more antennas. This further emphasizes the critical role of increasing the number of antennas for clutter suppression and highlights the importance of investigating the width of clutter nulling. }
The simulation results for different numbers of antennas indicate that within a reasonable range of $\Delta$, clutter suppression can be enhanced with only a minor reduction in SCNR, thereby improving robustness against mismatched steering vectors. 
Simulation results suggest that the optimal range for $\Delta$ typically falls between 0.03 and 0.04. Thus, the proposed joint optimization algorithm achieves robust-enhanced clutter suppression by adjusting $\Delta$. %
	
	\begin{figure}
	\centerline{\includegraphics [scale=0.6]{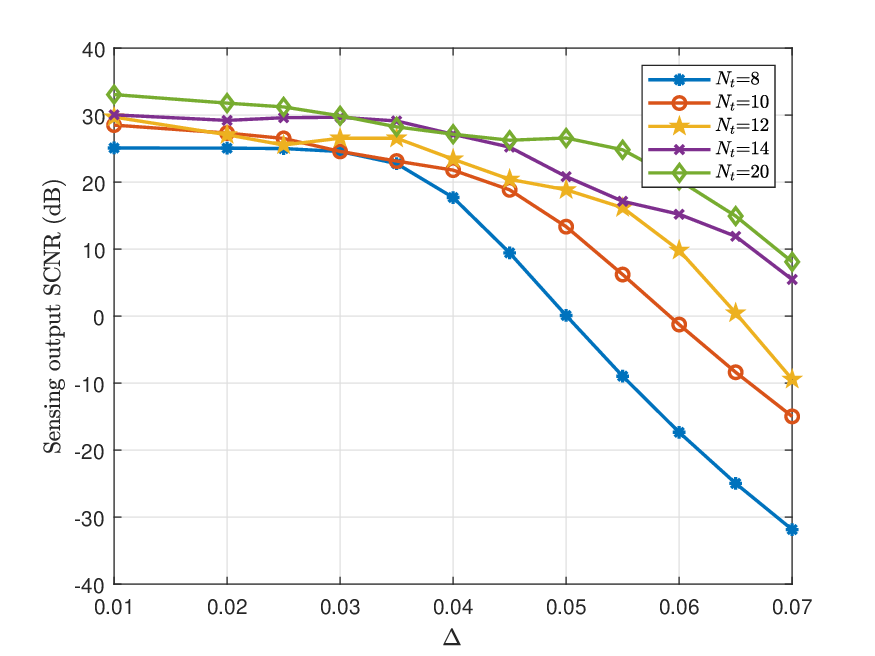}}
	\caption{\textcolor{black}{ Sensing output SCNR versus $\Delta $, for $ N_t$ = 8, 10, 12, 14, 20. }}
		\label{Fig.6}
	\end{figure}
	
	\textcolor{black}{ In optimization problem (21), the parameter $\Gamma$ impacts the minimum received SINR that a single user can achieve, thereby constraining the communication sum rate of the ISAC system. In power-constrained scenarios, an increase in $\Gamma$ results in higher transmission power being used to transmit communication signals, consequently suppressing the sensing output SCNR. Moreover, a larger number of antennas exacerbates this suppression effect. Nonetheless, as shown in Fig. 7, a higher SCNR can still be achieved by increasing the number of transmitting antennas.
	A reduction in the number of antennas leads to decreased spatial diversity, resulting in reduced interference and system complexity. Moreover, the computational complexity of $\gamma_k$ depends on $\mathcal{O}\left( {N}^{2}{{M}^{2}} \right)$, and SCNR depends on $\mathcal{O}\left(2 {N}^{2}{{M}^{2}} \right)$. Therefore, it can be concluded that when the number of transmitting antennas is large, the calculation of SCNR is more affected. }
	
	\begin{figure}
		\centerline{\includegraphics [scale=0.6]{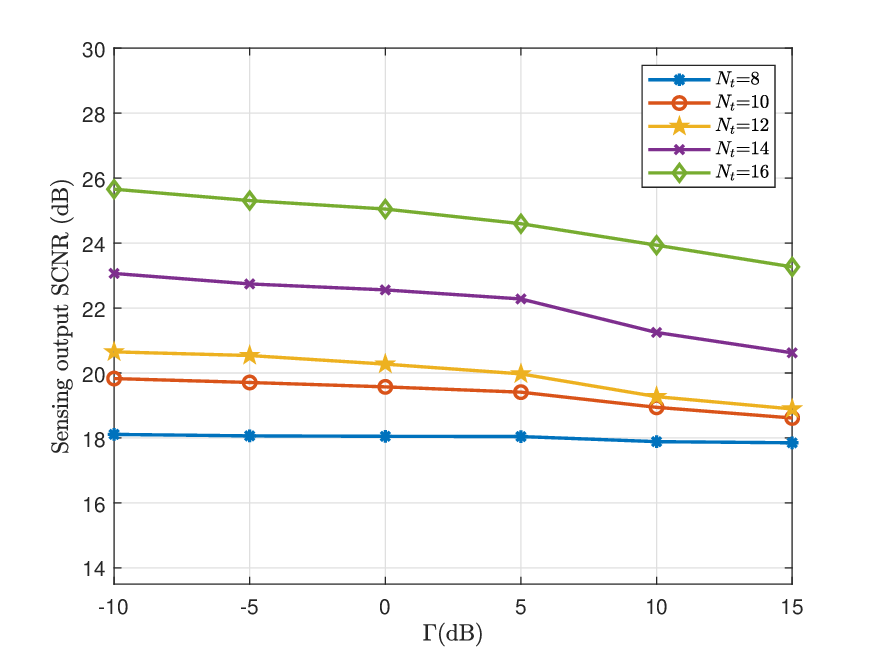}}
		\caption{ \textcolor{black} {Sensing output SCNR versus $\Gamma $, for $ N_t$ = 8, 10, 12, 14, 16. }}
		\label{Fig.7}
	\end{figure}
	
	\begin{figure}
		\centerline{\includegraphics [scale=0.6]{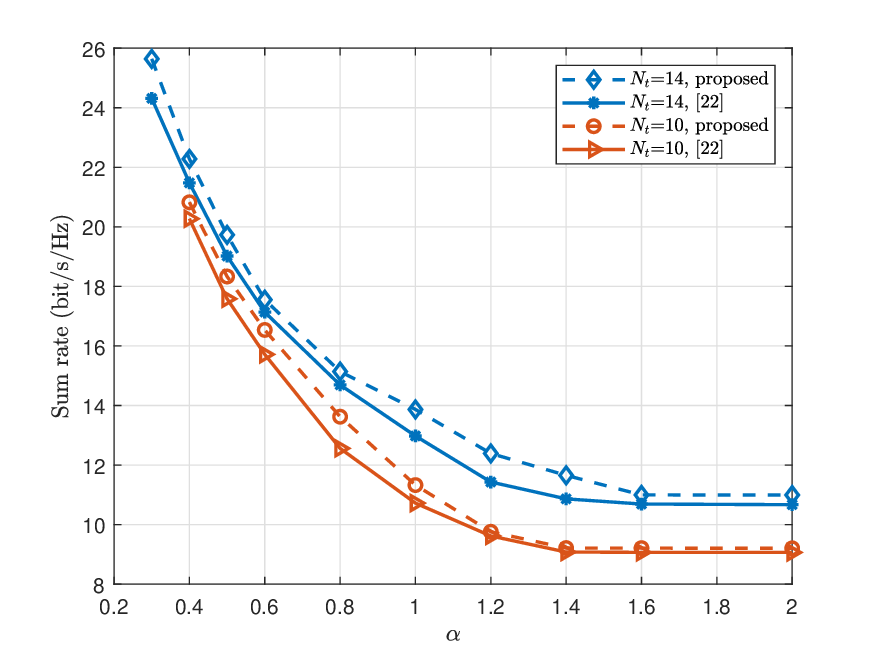}}
		\caption{ \textcolor{black}{Comparison of communication sum rates between the proposed algorithm and the algorithm described in [22] versus parameter $\alpha $, for $N_t$ = 10, 14.}}
		\label{Fig.8}
	\end{figure}

	 \textcolor{black}{ Fig. 8 illustrates the variation of communication sum rate versus waveform similarity parameter $\alpha$ with different number of antennas.
	 As the value of $\alpha$ increases, the optimization problem imposes fewer constraints on sensing, resulting in a decrease in the communication sum rate. As a result, the communication sum rate gradually decreases and stabilizes at a lower value with increasing $\alpha$.
	 Fig. 8 also includes a comparison between the proposed algorithm and the algorithm in [22] that neglects mismatches in the steering vector. In reference [22], the clutter direction exhibits a fixed angular deviation, resulting in a fixed suppression interval for errors caused by clutter uncertainty. However, in our proposed algorithm, the clutter suppression interval is adjustable. Our proposed algorithm exhibits a slight advantage over [22] regarding communication transmission rates when $\Delta$ is set to 0.03. As $\alpha$ approaches 0, the optimized waveform gradually converges to the reference waveform, but this reduces the flexibility in waveform design. Since our optimization objective is to maximize sensing SCNR, lower values of $\alpha$ correspond to higher communication sum rates. As $\alpha$ increases, the communication sum rate decreases and eventually stabilizes.  Overall, the impact of clutter-induced mismatches on communication sum rate gradually decreases and the communication sum rate stabilizes as $\alpha$ increases. By adjusting the value of $\alpha$, the ISAC transceiver can be optimized for specific scenarios.}
	
	\section{Conclusion}
		In this paper, a joint design for transmit beamforming and receive filter of MIMO-ISAC systems is proposed for interference management, including MUI, SI, and clutter.  
	The joint optimal designs are derived to maximize the sensing output SCNR while meeting the communication QoS requirements under constant modulus power constraint and waveform similarity constraint. 
	The problem is decomposed into two subproblems, which are optimized alternately to obtain the optimal solution. Firstly, the receive filter is optimized by reconstructing the interference covariance matrix using a diagonal loading approach. Subsequently,  the transmit beamforming is optimized and decomposed into communication beamforming and sensing beamforming. The two subproblems are alternately optimized to obtain the robustness-enhanced transceiver design together with the combined gain of interference suppression. The performance of the proposed methods is evaluated via numerical results.

	\bibliographystyle{IEEEtran}
	\bibliography{ref}
\end{document}